\newtheorem{lem}{Lemma}[section]
\newtheorem{note}{Note}
\newtheorem{example}{Example}
\newtheorem{defn}{Definition}[section]
\newtheorem{thm}{Theorem}
\newtheorem{conj}{Conjecture}
\newtheorem{rem}{Remark}
\numberwithin{equation}{section}
\begin{document}
\newcommand{\beqa}{\begin{eqnarray}}
\newcommand{\eeqa}{\end{eqnarray}}
\newcommand{\thmref}[1]{Theorem~\ref{#1}}
\newcommand{\secref}[1]{Sect.~\ref{#1}}
\newcommand{\lemref}[1]{Lemma~\ref{#1}}
\newcommand{\propref}[1]{Proposition~\ref{#1}}
\newcommand{\corref}[1]{Corollary~\ref{#1}}
\newcommand{\remref}[1]{Remark~\ref{#1}}
\newcommand{\er}[1]{(\ref{#1})}
\newcommand{\nc}{\newcommand}
\newcommand{\rnc}{\renewcommand}

\nc{\cal}{\mathcal}

\nc{\goth}{\mathfrak}
\rnc{\bold}{\mathbf}
\renewcommand{\frak}{\mathfrak}
\renewcommand{\Bbb}{\mathbb}

\newcommand{\id}{\text{id}}

\nc\K{\mathbb K}
\nc{\Cal}{\mathcal}
\nc{\Xp}[1]{X^+(#1)}
\nc{\Xm}[1]{X^-(#1)}
\nc{\on}{\operatorname}
\nc{\ch}{\mbox{ch}}
\nc{\Z}{{\bold Z}}
\nc{\J}{{\mathcal J}}
\nc{\C}{{\bold C}}
\nc{\Q}{{\bold Q}}
\renewcommand{\P}{{\mathcal P}}
\nc{\N}{{\Bbb N}}
\nc\beq{\begin{equation}}
\nc\enq{\end{equation}}
\nc\lan{\langle}
\nc\ran{\rangle}
\nc\bsl{\backslash}
\nc\mto{\mapsto}
\nc\lra{\leftrightarrow}
\nc\hra{\hookrightarrow}
\nc\sm{\smallmatrix}
\nc\esm{\endsmallmatrix}
\nc\sub{\subset}
\nc\ti{\tilde}
\nc\nl{\newline}
\nc\fra{\frac}
\nc\und{\underline}
\nc\ov{\overline}
\nc\ot{\otimes}
\nc\bbq{\bar{\bq}_l}
\nc\bcc{\thickfracwithdelims[]\thickness0}
\nc\ad{\text{\rm ad}}
\nc\Ad{\text{\rm Ad}}
\nc\Hom{\text{\rm Hom}}
\nc\End{\text{\rm End}}
\nc\Ind{\text{\rm Ind}}
\nc\Res{\text{\rm Res}}
\nc\Ker{\text{\rm Ker}}
\rnc\Im{\text{Im}}
\nc\sgn{\text{\rm sgn}}
\nc\tr{\text{\rm tr}}
\nc\Tr{\text{\rm Tr}}
\nc\supp{\text{\rm supp}}
\nc\card{\text{\rm card}}
\nc\bst{{}^\bigstar\!}
\nc\he{\heartsuit}
\nc\clu{\clubsuit}
\nc\spa{\spadesuit}
\nc\di{\diamond}
\nc\cW{\cal W}
\nc\cG{\cal G}
\nc\al{\alpha}
\nc\bet{\beta}
\nc\ga{\gamma}
\nc\de{\delta}
\nc\ep{\epsilon}
\nc\io{\iota}
\nc\om{\omega}
\nc\si{\sigma}
\rnc\th{\theta}
\nc\ka{\kappa}
\nc\la{\lambda}
\nc\ze{\zeta}

\nc\vp{\varpi}
\nc\vt{\vartheta}
\nc\vr{\varrho}

\nc\Ga{\Gamma}
\nc\De{\Delta}
\nc\Om{\Omega}
\nc\Si{\Sigma}
\nc\Th{\Theta}
\nc\La{\Lambda}

\nc\boa{\bold a}
\nc\bob{\bold b}
\nc\boc{\bold c}
\nc\bod{\bold d}
\nc\boe{\bold e}
\nc\bof{\bold f}
\nc\bog{\bold g}
\nc\boh{\bold h}
\nc\boi{\bold i}
\nc\boj{\bold j}
\nc\bok{\bold k}
\nc\bol{\bold l}
\nc\bom{\bold m}
\nc\bon{\bold n}
\nc\boo{\bold o}
\nc\bop{\bold p}
\nc\boq{\bold q}
\nc\bor{\bold r}
\nc\bos{\bold s}
\nc\bou{\bold u}
\nc\bov{\bold v}
\nc\bow{\bold w}
\nc\boz{\bold z}

\nc\ba{\bold A}
\nc\bb{\bold B}
\nc\bc{\bold C}
\nc\bd{\bold D}
\nc\be{\bold E}
\nc\bg{\bold G}
\nc\bh{\bold H}
\nc\bi{\bold I}
\nc\bj{\bold J}
\nc\bk{\bold K}
\nc\bl{\bold L}
\nc\bm{\bold M}
\nc\bn{\bold N}
\nc\bo{\bold O}
\nc\bp{\bold P}
\nc\bq{\bold Q}
\nc\br{\bold R}
\nc\bs{\bold S}
\nc\bt{\bold T}
\nc\bu{\bold U}
\nc\bv{\bold V}
\nc\bw{\bold W}
\nc\bz{\bold Z}
\nc\bx{\bold X}

\nc\ca{\mathcal A}
\nc\cb{\mathcal B}
\nc\cc{\mathcal C}
\nc\cd{\mathcal D}
\nc\ce{\mathcal E}
\nc\cf{\mathcal F}
\nc\cg{\mathcal G}
\rnc\ch{\mathcal H}
\nc\ci{\mathcal I}
\nc\cj{\mathcal J}
\nc\ck{\mathcal K}
\nc\cl{\mathcal L}
\nc\cm{\mathcal M}
\nc\cn{\mathcal N}
\nc\co{\mathcal O}
\nc\cp{\mathcal P}
\nc\cq{\mathcal Q}
\nc\car{\mathcal R}
\nc\cs{\mathcal S}
\nc\ct{\mathcal T}
\nc\cu{\mathcal U}
\nc\cv{\mathcal V}
\nc\cz{\mathcal Z}
\nc\cx{\mathcal X}
\nc\cy{\mathcal Y}

\nc\e[1]{E_{#1}}
\nc\ei[1]{E_{\delta - \alpha_{#1}}}
\nc\esi[1]{E_{s \delta - \alpha_{#1}}}
\nc\eri[1]{E_{r \delta - \alpha_{#1}}}
\nc\ed[2][]{E_{#1 \delta,#2}}
\nc\ekd[1]{E_{k \delta,#1}}
\nc\emd[1]{E_{m \delta,#1}}
\nc\erd[1]{E_{r \delta,#1}}

\nc\ef[1]{F_{#1}}
\nc\efi[1]{F_{\delta - \alpha_{#1}}}
\nc\efsi[1]{F_{s \delta - \alpha_{#1}}}
\nc\efri[1]{F_{r \delta - \alpha_{#1}}}
\nc\efd[2][]{F_{#1 \delta,#2}}
\nc\efkd[1]{F_{k \delta,#1}}
\nc\efmd[1]{F_{m \delta,#1}}
\nc\efrd[1]{F_{r \delta,#1}}

\nc\fa{\frak a}
\nc\fb{\frak b}
\nc\fc{\frak c}
\nc\fd{\frak d}
\nc\fe{\frak e}
\nc\ff{\frak f}
\nc\fg{\frak g}
\nc\fh{\frak h}
\nc\fj{\frak j}
\nc\fk{\frak k}
\nc\fl{\frak l}
\nc\fm{\frak m}
\nc\fn{\frak n}
\nc\fo{\frak o}
\nc\fp{\frak p}
\nc\fq{\frak q}
\nc\fr{\frak r}
\nc\fs{\frak s}
\nc\ft{\frak t}
\nc\fu{\frak u}
\nc\fv{\frak v}
\nc\fz{\frak z}
\nc\fx{\frak x}
\nc\fy{\frak y}

\nc\fA{\frak A}
\nc\fB{\frak B}
\nc\fC{\frak C}
\nc\fD{\frak D}
\nc\fE{\frak E}
\nc\fF{\frak F}
\nc\fG{\frak G}
\nc\fH{\frak H}
\nc\fJ{\frak J}
\nc\fK{\frak K}
\nc\fL{\frak L}
\nc\fM{\frak M}
\nc\fN{\frak N}
\nc\fO{\frak O}
\nc\fP{\frak P}
\nc\fQ{\frak Q}
\nc\fR{\frak R}
\nc\fS{\frak S}
\nc\fT{\frak T}
\nc\fU{\frak U}
\nc\fV{\frak V}
\nc\fZ{\frak Z}
\nc\fX{\frak X}
\nc\fY{\frak Y}
\nc\tfi{\ti{\Phi}}
\nc\bF{\bold F}
\rnc\bol{\bold 1}

\nc\ua{\bold U_\A}

\nc\qinti[1]{[#1]_i}
\nc\q[1]{[#1]_q}
\nc\xpm[2]{E_{#2 \delta \pm \alpha_#1}}  
\nc\xmp[2]{E_{#2 \delta \mp \alpha_#1}}
\nc\xp[2]{E_{#2 \delta + \alpha_{#1}}}
\nc\xm[2]{E_{#2 \delta - \alpha_{#1}}}
\nc\hik{\ed{k}{i}}
\nc\hjl{\ed{l}{j}}
\nc\qcoeff[3]{\left[ \begin{smallmatrix} {#1}& \\ {#2}& \end{smallmatrix}
\negthickspace \right]_{#3}}
\nc\qi{q}
\nc\qj{q}

\nc\ufdm{{_\ca\bu}_{\rm fd}^{\le 0}}


\nc\isom{\cong} 

\nc{\pone}{{\Bbb C}{\Bbb P}^1}
\nc{\pa}{\partial}
\def\H{\mathcal H}
\def\L{\mathcal L}
\nc{\F}{{\mathcal F}}
\nc{\Sym}{{\goth S}}
\nc{\A}{{\mathcal A}}
\nc{\arr}{\rightarrow}
\nc{\larr}{\longrightarrow}

\nc{\ri}{\rangle}
\nc{\lef}{\langle}
\nc{\W}{{\mathcal W}}
\nc{\uqatwoatone}{{U_{q,1}}(\su)}
\nc{\uqtwo}{U_q(\goth{sl}_2)}
\nc{\dij}{\delta_{ij}}
\nc{\divei}{E_{\alpha_i}^{(n)}}
\nc{\divfi}{F_{\alpha_i}^{(n)}}
\nc{\Lzero}{\Lambda_0}
\nc{\Lone}{\Lambda_1}
\nc{\ve}{\varepsilon}
\nc{\phioneminusi}{\Phi^{(1-i,i)}}
\nc{\phioneminusistar}{\Phi^{* (1-i,i)}}
\nc{\phii}{\Phi^{(i,1-i)}}
\nc{\Li}{\Lambda_i}
\nc{\Loneminusi}{\Lambda_{1-i}}
\nc{\vtimesz}{v_\ve \otimes z^m}

\nc{\asltwo}{\widehat{\goth{sl}_2}}
\nc\ag{\widehat{\goth{g}}}  
\nc\teb{\tilde E_\boc}
\nc\tebp{\tilde E_{\boc'}}

\newcommand{\LR}{\bar{R}}
\newcommand{\eeq}{\end{equation}}
\newcommand{\ben}{\begin{eqnarray}}
\newcommand{\een}{\end{eqnarray}}

\title[Higher order relations for the $q-$Onsager algebra]{Analogues of Lusztig's higher order relations\\ for the $q-$Onsager algebra}
\author{P. Baseilhac and T. T. Vu}
\address{Laboratoire de Math\'ematiques et Physique Th\'eorique CNRS/UMR 7350,
        F\'ed\'eration Denis Poisson, Universit\'e de Tours, Parc de Grammont, 37200 Tours, FRANCE}
\email{baseilha@lmpt.univ-tours.fr; Thi-thao.Vu@lmpt.univ-tours.fr}

\begin{abstract} Let $A,A^*$ be the generators of the $q-$Onsager algebra. Analogues of Lusztig's $r-th$ higher order relations
are proposed. In a first part, based on the properties of tridiagonal pairs of $q-$Racah type which satisfy the defining 
relations of the $q-$Onsager algebra, higher order relations are derived for $r$ generic. The coefficients entering in the relations 
are determined from a two-variable polynomial generating function. In a second part, it is conjectured that $A,A^*$ 
satisfy the higher order relations previously obtained. The conjecture is proven for $r=2,3$. For $r$ generic, 
using an inductive argument recursive formulae for the coefficients are derived. The conjecture is checked for several
values of $r\geq 4$. Consequences for coideal subalgebras and integrable systems with  boundaries at $q$ a root of unity are pointed out.
\end{abstract}

\maketitle

\vskip -0.2cm

{\small MSC:\ 81R50;\ 81R10;\ 81U15;\ 81T40.}

{{\small  {\it \bf Keywords}:  $q-$Onsager algebra; Quantum group; Higher $q-$Serre relations; Tridiagonal algebra}}

\vspace{5mm}

\section{Introduction}
Consider the quantum universal enveloping algebras for arbitrary Kac-Moody algebras $\widehat{g}$ introduced by Drinfeld \cite{Dr} and Jimbo \cite{Jim}.  Let  $\{a_{ij}\}$ be the extended Cartan matrix. Fix coprime integers $d_i$ such that $d_ia_{ij}$ is symmetric. Define $q_i=q^{d_i}$. The quantum universal enveloping algebra $U_q(\widehat{g})$   is generated by the elements $\{h_j,e_j,f_j\}$, $j=0,1,...,rank(g)$, which satisfy the defining relations:
\beqa
 &&[h_i,h_j]=0\ , \quad [h_i,e_j]=a_{ij}e_j\ , \quad
[h_i,f_j]=-a_{ij}f_j\ ,\quad
[e_i,f_j]=\delta_{ij}\frac{q_i^{h_i}-q_i^{-h_i}}{q_i-q_i^{-1}},  \quad i,j=0,1,...,rank(g) \ , \ \nonumber
\nonumber\eeqa
together with the so-called {\it quantum Serre relations}\footnote{As usual, we denote: $
\left[ \begin{array}{c}
n \\
m 
\end{array}\right]_q
=\frac{[n]_q!}{[m]_q!\,[n-m]_q!}\ , \qquad
[n]_q!=\prod_{l=1}^n[l]_q\ ,\qquad
[n]_q=\frac{q^n-q^{-n}}{q-q^{-1}}, \quad [0]_q=1 \ .$} 
($i\neq j$)
\beqa
&& \sum_{k=0}^{1-a_{ij}}(-1)^k
\left[ \begin{array}{c}
1-a_{ij} \\
k 
\end{array}\right]_{q_i}
e_i^{1-a_{ij}-k}\,e_j\,e_i^k=0\ , \qquad 
\sum_{k=0}^{1-a_{ij}}(-1)^k
\left[ \begin{array}{c}
1-a_{ij} \\
k
\end{array}\right]_{q_i}
f_i^{1-a_{ij}-k}\,f_j\,\,f_i^k=0\ .  \label{Uqserre}
\eeqa
\vspace{1mm}

In the mathematical literature \cite{Luzt}, generalizations of the relations (\ref{Uqserre}) - the so-called {\it higher order quantum ($q-$)Serre relations} - have been proposed. For  $\widehat{g}=\widehat{sl_2}$, they read\footnote{For  $\widehat{g}=\widehat{sl_2}$, recall that $a_{ii}=2$, $a_{ij}=-2$  with $i,j=0,1$.}  \cite{Luzt}:
\beqa
\sum_{k=0}^{2r+1} (-1)^{k} \,  \left[ \begin{array}{c}
2r+1 \\
k
\end{array}\right]_{q} e_i^{2 r+1-k} e_j^{r} e_i^{k}&=&0\ , \label{hqSerre}\\
\sum_{k=0}^{2r+1} (-1)^{k}  \,  \left[ \begin{array}{c}
2r+1 \\
k
\end{array}\right]_{q} f_i^{2r+1-k} f_j^{r} f_i^{k}&=&0 \ \ \quad \mbox{for} \quad i\neq j,\ \  i,j=0,1 \ . \nonumber
\eeqa
Note that the higher order $q-$Serre relations (\ref{hqSerre}) appear in different contexts. On one hand, they determine the commutation relations among the elements of the $U_q(\widehat{g})$-subalgebra generated by the divided powers of $e_i,f_i$ which can be also obtained  using the braid group action of $U_q(\widehat{g})$. They also arise in the discussion of  the quantum Frobenius homomorphism \cite{Luzt}. On the other hand, as will be mentioned in the last Section, in the integrable systems literature the higher order $q-$Serre relations (\ref{hqSerre}) play a central role in the identification of the symmetry of quantum integrable models at $q$ a root of unity \cite{DFM,KM,AYP,ND}. 
 For instance, using (\ref{hqSerre}) one shows that the XXZ spin chain with periodic boundary conditions enjoys a $\widehat{sl_2}$ loop symmetry at $q$ a root of unity \cite{DFM}.Also, they are essential to derive the Serre relations for the basic generators of the superintegrable chiral Potts model \cite{ND,AYP}. \vspace{1mm}

In recent years, a new algebraic structure called the $q-$Onsager algebra that is closely related with $U_q(\widehat{sl_2})$ has appeared in the mathematical \cite{Ter03} and integrable systems  \cite{B1} literature. In particular, a  relation between the $q-$Onsager algebra, the spectral parameter dependent reflection equation \cite{Cher84,Skly88} and coideal subalgebras of $U_q(\widehat{sl_2})$ has been exhibited \cite{IT32,B1,BK1,BasS,BB0,Kolb}.  Recall that the tridiagonal algebra introduced in the context of $P-$ and $Q-$polynomial association schemes \cite{Ter03} is an associative algebra with unit which consists of two generators $ A$ and $A^*$ called the standard generators. In general, the defining relations depend on five scalars $\rho,\rho^*,\gamma,\gamma^*$ and $\beta$. The $q-$Onsager algebra is a special case of the tridiagonal algebra: it corresponds to the {\it reduced} parameter sequence $\gamma=0,\gamma^*=0$, $\beta=q^2+q^{-2}$ and $\rho=\rho_0$, $\rho^*=\rho_1$  which exhibits all interesting properties that can be extended to more general parameter sequences. The defining relations of the $q-$Onsager algebra read\footnote{The $q-$commutator $\big[X,Y\big]_q=qXY-q^{-1}YX$ is introduced, where $q$ is called the deformation parameter.}
\beqa
[A,[A,[A,A^*]_q]_{q^{-1}}]=\rho_0[A,A^*]\
,\qquad
[A^*,[A^*,[A^*,A]_q]_{q^{-1}}]=\rho_1[A^*,A]\
\label{qDG} , \eeqa
which can be seen as $\rho_i-$deformed analogues of the $q-$Serre relations (\ref{Uqserre}) associated with $\widehat{g}\equiv \widehat{sl_2}$. For $q=1$, $\rho_0=\rho_1=16$, note that they coincide with the Dolan-Grady relations \cite{DG}. 
\vspace{1mm}

 In the study of tridiagonal algebras and the representation theory associated with the special case $\rho_0=\rho_1=0$,  higher order $q-$Serre relations (\ref{hqSerre}) play an important role \cite{IT03} in the construction of a basis of the corresponding vector space. As  suggested in \cite[Problem~3.4]{IT03}, for $\rho_0\neq 0$, $\rho_1\neq 0$ finding analogues of the higher order $q-$Serre relations for the $q-$Onsager algebra is an interesting problem.  Another interest for the construction of higher order relations  associated with the $q-$Onsager algebra comes from the theory of quantum integrable systems with boundaries. Indeed, by analogy with the case of periodic boundary conditions \cite{DFM}, such relations should play a central role in the identification of the symmetry of the Hamiltonian of the XXZ open spin chain at $q$ a root of unity and special boundary parameters. \vspace{1mm}

Motivated by these open problems, in the present paper we propose the $r-th$ higher order tridiagonal relations associated with the $q-$Onsager algebra (\ref{qDG}), which we refer as the {\it $r-th$  higher order $q-$Dolan-Grady relations}. As will be argued, they can be written in the form:
\beqa
\sum_{p=0}^{r}\sum_{j=0}^{2r+1-2p}  (-1)^{j+p}  \rho_0^{p}\, {c}_{j}^{[r,p]}\,  A^{2 r+1-2p-j} {A^*}^{r} A^{j}&=&0\  \label{qDGfinr} \ ,\\
\sum_{p=0}^{r}\sum_{j=0}^{2r+1-2p} (-1)^{j+p} \rho_1^{p} \, {c}_{j}^{[r,p]}\,  {A^*}^{2 r+1-2p-j} {A}^{r} {A^*}^{j}&=&0\  
\ \nonumber
\eeqa
where $c_{2(r-p)+1-j}^{[r,p]} =c_{j}^{[r,p]}$ and\footnote{Here $\{ x\}$ denotes the integer part of $x$. Let $j,m,n$ be integers, we write $j=\overline{m,n}$  for $j=m,m+1,...,n-1,n$.}
\beqa
\qquad &&c_{j}^{[r,p]} =  \sum_{k=0}^j   \frac{(r-p)!}{(\{\frac{j-k}{2}\})!(r-p-\{\frac{j-k}{2}\})!}
    \sum_{{\cal P}}  [s_1]^2_{q^2}...[s_p]^2_{q^2}  \frac{[2s_{p+1}]_{q^2}...[2s_{p+k}]_{q^2}}{[s_{p+1}]_{q^2}...[s_{p+k}]_{q^2}}  \label{cfinr}
\eeqa
\beqa 
\mbox{with}\quad \ \left\{\begin{array}{cc}
\!\!\! \!\!\! \!\!\! \!\!\!  \!\!\! \!\!\! \!\!\! \!\!\!   \!\!\! \!\!\! \!\!\! \!\!\! j= \overline{0,r-p}\ , \quad s_i\in\{1,2,...,r\}\ ,\\
{\cal P}: \begin{array}{cc} \ \ s_1<\dots<s_p\ ;\quad \ s_{p+1}<\dots<s_{p+k}\ ,\\
 \{s_{1},\dots,s_{p}\} \cap \{s_{p+1},\dots,s_{p+k}\}=\emptyset \end{array}
\end{array}\right.\ .\nonumber
\eeqa
For $\rho_0=\rho_1=0$, the coefficients reduce to the ones in  (\ref{hqSerre}) \cite{Luzt}.\vspace{1mm}
\vspace{1mm}

The paper is organized as follows. In the next Section, we briefly recall the notion of tridiagonal (TD) pairs and tridiagonal systems (see \cite{TD10} and references therein). Based on the properties of TD pairs of $q-$Racah type \cite{NT:muqrac}, higher order tridiagonal relations satisfied by TD pairs are constructed (cf. Theorem \ref{hqdg}). For the reduced parameter sequence (see above comments), it is known \cite{Ter03} that the basic tridiagonal relations reduce to (\ref{qDG}). In this case, it is shown that the higher order tridiagonal relations associated with (\ref{qDG}) take the form (\ref{qDGfinr}) for $r$ generic. See Example \ref{exfinr}. A two-variable polynomial generating function  for the coefficients is proposed, which gives (\ref{cfinr}). Motivated by these results, in Section 3 it is more generally  conjectured that given $A,A^*$ satisfying (\ref{qDG}) the $r-th$ higher order $q-$Dolan-Grady relations (\ref{qDGfinr}) with (\ref{cfinr}) are satisfied (see conjecture \ref{conj1}). First, we prove the conjecture in the simplest examples for $r=2,3$. Then, by induction the existence and structure of relations of the form (\ref{qDGfinr}) for generic values of $r$ are studied, leading to explicit recursive formulae for the coefficients ${c}_{k}^{[r,p]}$.  For several values of $r\geq 4$, the coefficients obtained by both approaches - from the properties of TD pairs or by induction -  are found to coincide, giving another support for the proposal. In the last Section,  potential applications of the higher $q-$Dolan-Grady relations (\ref{qDGfinr}) to the theory of coideal subalgebras and integrable systems with boundaries at $q$ a root of unity are pointed out. In Appendices A,B useful recursion relations are reported. 

\vspace{2mm}

{\bf Notations:} Throughout this paper  $\K$ denotes a field. $q$ is assumed not to be a root of unity.
\vspace{1mm}

\vspace{3mm}
\section{Higher order relations from the theory of tridiagonal pairs}
The main result of this Section is Theorem \ref{hqdg}, which follows from previous works on tridiagonal pairs. Note that the basic material (Definitions 2.1-2.2, Lemma 2.2 and Theorem 1) introduced in the beginning of this Section is essentially taken from \cite{TD00,TD10,NT:muqrac}. As an application of Theorem 2, the higher order $q-$Dolan-Grady relations (\ref{qDGfinr}) are derived  in Example 2 for $r$ generic.

\subsection{Tridiagonal pairs of $q-$Racah type}
Let $V$ denote a vector space over $\K$ with finite
positive dimension. For a  linear transformation $A:V\to V$
and a subspace $W \subseteq V$, we call $W$ an
 {\it eigenspace} of $A$ whenever $W\not=0$ and there exists $\theta \in \K$ such that 
$W=\lbrace v \in V \;\vert \;Av = \theta v\rbrace$, $\theta$ is the {\it eigenvalue} of
$A$ associated with $W$. $A$ is {\it diagonalizable} whenever $V$ is spanned by the eigenspaces of $A$.

\begin{defn}  
{\rm \cite[Definition~1.1]{TD00}}
\label{def:tdp}
\rm
Let $V$ denote a vector space over $\K$ with finite
positive dimension. 
By a {\it tridiagonal pair} (or {\it $TD$ pair})
on $V$
we mean an ordered pair of linear transformations
$A:V \to V$ and 
$A^*:V \to V$ 
that satisfy the following four conditions.
\begin{itemize}
\item[(i)] Each of $A,A^*$ is diagonalizable.
\item[(ii)] There exists an ordering $\lbrace V_i\rbrace_{i=0}^d$ of the  
eigenspaces of $A$ such that 
\begin{equation}
A^* V_i \subseteq V_{i-1} + V_i+ V_{i+1} \qquad \qquad 0 \leq i \leq d,
\label{eq:t1}
\end{equation}
where $V_{-1} = 0$ and $V_{d+1}= 0$.
\item[(iii)] There exists an ordering $\lbrace V^*_i\rbrace_{i=0}^{\delta}$ of
the  
eigenspaces of $A^*$ such that 
\begin{equation}
A V^*_i \subseteq V^*_{i-1} + V^*_i+ V^*_{i+1} 
\qquad \qquad 0 \leq i \leq \delta,
\label{eq:t2}
\end{equation}
where $V^*_{-1} = 0$ and $V^*_{\delta+1}= 0$.
\item[(iv)] There does not exist a subspace $W$ of $V$ such  that $AW\subseteq W$,
$A^*W\subseteq W$, $W\not=0$, $W\not=V$.
\end{itemize}
We say the pair $A,A^*$ is {\it over $\K$}.
We call $V$ the 
{\it underlying
 vector space}.
\end{defn}

\begin{note} According to a common 
notational convention, for a linear transformation $A$
the conjugate-transpose of $A$ is denoted
$A^*$. We emphasize we are {\it not} using this convention.
In a TD pair $(A,A^*)$, the linear transformations 
$A$ and $A^*$ are arbitrary subject to (i)--(iv) above.
\end{note}

Let $A,A^*$ denote a TD pair on $V$, as in Definition 
\ref{def:tdp}. By \cite[Lemma 4.5]{TD00} the integers $d$ and $\delta$ from
(ii), (iii) are equal and called the {\it diameter} of the
pair. An ordering of the eigenspaces of $A$ (resp. $A^*$)
is said to be {\em standard} whenever it satisfies  (\ref{eq:t1})
 (resp. (\ref{eq:t2})).  Let $\{V_i\}_{i=0}^d$ (resp.
$\{V^*_i\}_{i=0}^d$) denote a standard ordering of the eigenspaces
 of $A$ (resp. $A^*$). For $0 \leq i \leq d$, let 
$\theta_i$  (resp. $\theta^*_i$) denote the eigenvalue of
$A$  (resp.  $A^*$) associated with $V_i$  (resp. $V^*_i$). In what follows, we assume:
\begin{eqnarray}
\label{eq:const1}
&&\theta_i = \alpha + b q^{2i-d} + c q^{d-2i} 
\qquad \qquad 0 \leq i \leq d,
\\
\label{eq:const2}
&&\theta^*_i = \alpha^* + b^* q^{2i-d} + c^* q^{d-2i} 
\qquad \qquad 0 \leq i \leq d,
\label{eq:const3}
\end{eqnarray}
where $\alpha,\alpha^*$ are scalars, $q,\ b,\;b^*,\ c ,\;c^*$ are nonzero scalars such that $q\not=0, \ q^2 \not=1,
\ q^2 \not=-1$. In this case,  we say that $A,A^*$ is  a tridiagonal pair of $q-$Racah type \cite[Theorem~5.3]{NT:muqrac}.\vspace{1mm}

\begin{lem}\label{lem:polypart} For each positive integer $s$, there exists scalars $\beta_s,\gamma_s,{\gamma^*}_s,\delta_s,\delta^*_s$ in $\K$ such that
\beqa
\theta_i^2 - \beta_s\theta_i\theta_j  + \theta_j^2 - \gamma_s (\theta_i+\theta_j) -\delta_s&=&0 \ ,\\
{\theta_i^*}^2 - \beta_s{\theta_i^*}{\theta_j^*}  + {\theta_j^*}^2 - \gamma^*_s (\theta_i^*+\theta_j^*) -\delta^*_s &=&0 \quad \mbox{if} \quad |i-j|=s\nonumber \ . \quad (0\leq i,j\leq d).
\eeqa
\end{lem}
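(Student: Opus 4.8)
The plan is to prove the identity by a direct computation, using only the explicit form \er{eq:const1} of the eigenvalues together with their multiplicative structure. First I would absorb the powers $q^{\mp d}$ by setting $u=q^2$, $b'=bq^{-d}$ and $c'=cq^{d}$, so that \er{eq:const1} reads $\theta_i=\alpha+b'u^{i}+c'u^{-i}$. The crucial remark is that the product $b'c'=bc$ is \emph{independent} of $i$; this is precisely what will make the constant $\delta_s$ well defined.

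Next, fix $i,j$ with $|i-j|=s$. Since the target expression is symmetric under $\theta_i\leftrightarrow\theta_j$, I may assume $j=i+s$. Writing $p=b'u^{i}$ and $r=c'u^{-i}$, I would record
\[
\theta_i-\alpha=p+r,\qquad \theta_j-\alpha=pu^{s}+ru^{-s},\qquad pr=b'c'=bc .
\]
I then set $\beta_s=u^{s}+u^{-s}=q^{2s}+q^{-2s}$ and expand
\[
(\theta_i-\alpha)^2+(\theta_j-\alpha)^2-\beta_s(\theta_i-\alpha)(\theta_j-\alpha)
= p^2\bigl(1+u^{2s}-\beta_su^{s}\bigr)+r^2\bigl(1+u^{-2s}-\beta_su^{-s}\bigr)+pr\bigl(4-\beta_s(u^{s}+u^{-s})\bigr).
\]
The point of this choice of $\beta_s$ is that the coefficients of $p^2$ and of $r^2$ both vanish, so only the cross term survives, leaving $pr\,(4-\beta_s^2)=bc\,(4-\beta_s^2)$, a constant independent of $i$.

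Finally I would undo the shift by $\alpha$. Expanding the left-hand side back in $\theta_i,\theta_j$ and matching coefficients against the quadratic in the statement gives
\[
\beta_s=q^{2s}+q^{-2s},\qquad \gamma_s=\alpha(2-\beta_s),\qquad \delta_s=bc\,(4-\beta_s^2)-\alpha^2(2-\beta_s),
\]
which proves the first relation. The second (starred) relation follows verbatim after replacing $(\alpha,b,c)$ by $(\alpha^*,b^*,c^*)$; note that $\beta_s$ depends only on $q$ and is therefore common to both relations, while $\gamma^*_s=\alpha^*(2-\beta_s)$ and $\delta^*_s=b^*c^*(4-\beta_s^2)-{\alpha^*}^2(2-\beta_s)$.

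I do not expect a serious obstacle here: the single idea is to choose $\beta_s=q^{2s}+q^{-2s}$ so as to annihilate the two non-constant quadratic monomials $p^2$ and $r^2$, after which the identity reduces to the $i$-independence of $pr=bc$. The only points requiring a little care are to check that both the $p^2$ and $r^2$ coefficients vanish simultaneously (they do, by the $u^{s}\leftrightarrow u^{-s}$ symmetry) and to note that the case $j=i-s$ is already covered by the $\theta_i\leftrightarrow\theta_j$ symmetry of the expression.
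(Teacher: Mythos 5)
Your proof is correct and is essentially the computation the paper leaves implicit: the lemma is stated without proof, being regarded as immediate from the explicit $q$-Racah form of the eigenvalues. Your resulting values $\beta_s=q^{2s}+q^{-2s}$, $\gamma_s=\alpha(2-\beta_s)$, $\delta_s=bc\,(4-\beta_s^2)-\alpha^2(2-\beta_s)$ specialize at $\alpha=\alpha^*=0$ to exactly the constants $\beta_s=q^{2s}+q^{-2s}$, $\gamma_s=0$, $\delta_s=-bc\,(q^{2s}-q^{-2s})^2$ quoted later in the paper (Example \ref{exfinr}), confirming the argument.
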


\vspace{2mm}

\subsection{Tridiagonal systems}
For the analysis presented below, it will be convenient to recall the notion of TD system \cite{TD00}.
Let ${\rm End}(V)$ denote the $\K$-algebra of all linear
transformations from $V$ to $V$. Let $A$ denote a diagonalizable element of $\mbox{\rm End}(V)$.
Let $\{V_i\}_{i=0}^d$ denote an ordering of the eigenspaces of $A$
and let $\{\theta_i\}_{i=0}^d$ denote the corresponding ordering of
the eigenvalues of $A$. For $0 \leq i \leq d$, define $E_i \in 
\mbox{\rm End}(V)$ 
such that $(E_i-I)V_i=0$ and $E_iV_j=0$ for $j \neq i$ $(0 \leq j \leq d)$.
Here $I$ denotes the identity of $\mbox{\rm End}(V)$. We call $E_i$ the {\em primitive idempotent} of $A$ corresponding to $V_i$
(or $\theta_i$). Observe that
(i) $I=\sum_{i=0}^d E_i$;
(ii) $E_iE_j=\delta_{i,j}E_i$ $(0 \leq i,j \leq d)$;
(iii) $V_i=E_iV$ $(0 \leq i \leq d)$;
(iv) $A=\sum_{i=0}^d \theta_i E_i$.
Here $\delta_{i,j}$ denotes the Kronecker delta.
Now let $A,A^*$ denote a TD pair on $V$.
An ordering of the primitive idempotents 
 of $A$ (resp. $A^*$)
is said to be {\em standard} whenever
the corresponding ordering of the eigenspaces of $A$ (resp. $A^*$)
is standard.
\begin{defn}
{\rm \cite[Definition~2.1]{TD00}}
 \label{def:TDsystem} 
\rm
Let $V$ denote a vector space over $\K$ with finite
positive dimension.
By a {\it tridiagonal system} (or {\it  $TD$ system}) on $V$ we mean a sequence
\[
 \Phi=(A;\{E_i\}_{i=0}^d;A^*;\{E^*_i\}_{i=0}^d)
\]
that satisfies (i)--(iii) below.
\begin{itemize}
\item[(i)]
$A,A^*$ is a TD pair on $V$.
\item[(ii)]
$\{E_i\}_{i=0}^d$ is a standard ordering
of the primitive idempotents of $A$.
\item[(iii)]
$\{E^*_i\}_{i=0}^d$ is a standard ordering
of the primitive idempotents of $A^*$.
\end{itemize}
We say that $\Phi$ is {\em over} $\K$.
\end{defn}

\begin{lem} 
{\rm \cite[Lemma~2.2]{TD10}}
\label{lem:triplep}
Let $(A; \lbrace E_i\rbrace_{i=0}^d; A^*; \lbrace E^*_i\rbrace_{i=0}^d)$
denote a TD system. Then the following hold for $0 \leq i,j,r\leq d$.
\begin{enumerate}
\item[\rm (i)] $E^*_iA^rE^*_j=0$ if $|i-j|>r$\ ,
\item[\rm (ii)] $E_iA^{*r}E_j=0$ if $|i-j|>r$\ .
\end{enumerate}
\end{lem}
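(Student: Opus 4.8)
The plan is to prove part (i); part (ii) then follows at once by the symmetry of Definition \ref{def:tdp} interchanging $A\leftrightarrow A^*$ and $E_i\leftrightarrow E^*_i$, since conditions (ii) and (iii) are mirror images of one another and $(A^*;\{E^*_i\}_{i=0}^d;A;\{E_i\}_{i=0}^d)$ is again a TD system. The only structural input required is the tridiagonal condition (iii), i.e. (\ref{eq:t2}): since $AV^*_\ell\subseteq V^*_{\ell-1}+V^*_\ell+V^*_{\ell+1}$ with $V^*_\ell=E^*_\ell V$, and since $E^*_k$ annihilates $V^*_m$ for $m\neq k$, applying $E^*_k$ to $A(E^*_\ell V)$ leaves only the summands with $m\in\{\ell-1,\ell,\ell+1\}$. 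This yields the \emph{one-step bound}
\[
E^*_k\,A\,E^*_\ell=0\qquad\text{whenever } |k-\ell|>1,
\]
which is precisely the case $r=1$ of the claim, the boundary conventions $V^*_{-1}=V^*_{d+1}=0$ guaranteeing there is nothing to correct at the ends.

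First I would record the standard facts about the primitive idempotents noted before Definition \ref{def:TDsystem}, namely the resolution of the identity $I=\sum_{k=0}^d E^*_k$ and the orthogonality $E^*_kE^*_\ell=\delta_{k\ell}E^*_k$. I would then proceed by induction on $r$. The base case $r=0$ is immediate, since $E^*_iA^0E^*_j=E^*_iE^*_j=\delta_{ij}E^*_i$, which vanishes as soon as $|i-j|>0$. For the inductive step, writing $A^r=A^{r-1}A$ and inserting $I=\sum_k E^*_k$ between the two factors gives
\[
E^*_iA^rE^*_j=\sum_{k=0}^d E^*_iA^{r-1}E^*_k\,A\,E^*_j .
\]
In each summand the inductive hypothesis forces $|i-k|\le r-1$ for a nonzero contribution, while the one-step bound forces $|k-j|\le 1$; by the triangle inequality $|i-j|\le(r-1)+1=r$. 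Hence if $|i-j|>r$ every summand vanishes and $E^*_iA^rE^*_j=0$, completing the induction.

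An equivalent ``one-shot'' argument expands $A^r$ as a product of $r$ copies of $A$ and inserts a resolution of the identity between each consecutive pair, producing $\sum_{k_1,\dots,k_{r-1}}E^*_iAE^*_{k_1}AE^*_{k_2}\cdots AE^*_j$; with $k_0=i$, $k_r=j$ and the use of idempotency to split each $E^*_{k_a}$ into a link $E^*_{k_{a-1}}AE^*_{k_a}$, a summand survives only when all links satisfy $|k_{a-1}-k_a|\le 1$, whence $|i-j|\le\sum_{a=1}^r|k_{a-1}-k_a|\le r$. I anticipate no genuine obstacle here: the entire content sits in the one-step bound extracted from condition (iii), and the only point needing a little care is the bookkeeping of the inserted idempotents together with the endpoint conventions, so that the telescoping of the triangle inequality is legitimate.
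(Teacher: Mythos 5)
Your argument is correct. Note, however, that the paper offers no proof of this statement at all: it is quoted verbatim from \cite{TD10} (Lemma 2.2 there), so there is nothing internal to compare against. What you wrote is precisely the standard argument used in that reference and its antecedents: extract the one-step bound $E^*_kAE^*_\ell=0$ for $|k-\ell|>1$ from the tridiagonal condition (\ref{eq:t2}) together with $E^*_kV^*_m=0$ for $k\neq m$, then insert the resolution of the identity $I=\sum_k E^*_k$ and telescope the triangle inequality by induction on $r$. The deduction of (ii) from (i) by the $A\leftrightarrow A^*$ symmetry of Definition \ref{def:tdp} is also the standard route. No gaps.
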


\vspace{2mm}

\subsection{Higher order tridiagonal relations}
By \cite[Theorem 10.1]{TD00} the pair $A, A^*$ satisfy two polynomial
equations called the tridiagonal relations. These generalize the $q$-Serre relations (\ref{Uqserre})
and the Dolan-Grady relations \cite{DG}. 
\begin{thm}
\label{eq:lastchancedolangradyS99} (see \cite[Theorem 10.1]{TD00})
Let $\K$ denote a field, and let
 $A,A^*$ denote a TD pair of $q-$Racah type over  $\K$.
Then, there  exists a sequence of scalars $\beta, \gamma, \gamma^*,
\delta, \delta^*$ taken from $\K$ such that both
\begin{eqnarray}
\lbrack A,A^2A^*-\beta AA^*A + 
A^*A^2 -\gamma (AA^*+A^*A)-\delta A^*\rbrack &=& 0\ , 
\label{eq:qdolangrady199n}
\\
\lbrack A^*,A^{*2}A-\beta A^*AA^* + AA^{*2} -\gamma^* (A^*A+AA^*)-
\delta^* A\rbrack &=&0\ .  \qquad  \quad 
\label{eq:qdolangrady2S99n}
\end{eqnarray}
\end{thm}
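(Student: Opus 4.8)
The plan is to prove the first relation; the second then follows by the symmetric argument exchanging the roles of $A$ and $A^*$. Write $T = A^2A^* - \beta AA^*A + A^*A^2 - \gamma(AA^*+A^*A) - \delta A^*$ for the bracketed operator, so that the goal is $[A,T]=0$. The strategy is to sandwich the commutator between the primitive idempotents of $A$. Since $A$ is diagonalizable we have $I=\sum_{i=0}^d E_i$ and $A=\sum_{i=0}^d \theta_i E_i$, so it suffices to show $E_i[A,T]E_j=0$ for every pair $0\le i,j\le d$; summing over $i,j$ then yields $[A,T]=0$. Using $E_iA=\theta_i E_i$ and $AE_j=\theta_j E_j$, one immediately gets $E_i[A,T]E_j = (\theta_i-\theta_j)\,E_iTE_j$.

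The next step is to evaluate $E_iTE_j$. Each monomial in $T$ carries exactly one factor $A^*$ flanked by powers of $A$, so I would push the surrounding $A$'s onto the neighbouring idempotents: $E_iA^2A^*E_j=\theta_i^2\,E_iA^*E_j$, $E_iAA^*AE_j=\theta_i\theta_j\,E_iA^*E_j$, $E_iA^*A^2E_j=\theta_j^2\,E_iA^*E_j$, and likewise $E_i(AA^*+A^*A)E_j=(\theta_i+\theta_j)E_iA^*E_j$. Collecting coefficients gives the clean factorisation
\[
E_iTE_j = \bigl(\theta_i^2 - \beta\,\theta_i\theta_j + \theta_j^2 - \gamma(\theta_i+\theta_j) - \delta\bigr)\,E_iA^*E_j .
\]
Hence $E_i[A,T]E_j$ is a product of three factors: $(\theta_i-\theta_j)$, the scalar in parentheses, and $E_iA^*E_j$.

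The third step is the choice of scalars together with a case analysis on $|i-j|$. I would set $\beta=\beta_1$, $\gamma=\gamma_1$, $\delta=\delta_1$ (and, for the second relation, $\gamma^*=\gamma^*_1$, $\delta^*=\delta^*_1$ with the \emph{same} $\beta$), these being the scalars furnished by \lemref{lem:polypart} at $s=1$. Then: if $i=j$ the factor $\theta_i-\theta_j$ vanishes; if $|i-j|=1$ the parenthesised factor vanishes by \lemref{lem:polypart}; and if $|i-j|>1$ the factor $E_iA^*E_j$ vanishes by \lemref{lem:triplep}(ii) applied with $r=1$. All cases are covered, so $E_i[A,T]E_j=0$ for every $i,j$, and summing over $i,j$ gives the first relation. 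The second relation is obtained verbatim with $A$ and $A^*$ interchanged, using $A^*=\sum_i \theta_i^*E_i^*$, the starred part of \lemref{lem:polypart}, and \lemref{lem:triplep}(i).

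Once the two lemmas are in hand the derivation is essentially forced, so there is no real obstacle in the computation itself; the genuine content sits upstream, in showing that the $q$-Racah eigenvalue form \er{eq:const1}, \er{eq:const2} produces the three-term relation of \lemref{lem:polypart} precisely at $s=1$, and in the triple-product vanishing of \lemref{lem:triplep} — both of which I take as given. The one point I would verify with care is that a \emph{single} parameter $\beta$ serves both relations, which is exactly why it matters that \lemref{lem:polypart} supplies a shared $\beta_s$ rather than independent constants for the $\theta$ and $\theta^*$ families.
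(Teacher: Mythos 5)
Your proposal is correct and follows essentially the same route as the paper's own proof: sandwiching the commutator between the primitive idempotents $E_i,E_j$, factoring out $p_1(\theta_i,\theta_j)=(\theta_i-\theta_j)(\theta_i^2-\beta\theta_i\theta_j+\theta_j^2-\gamma(\theta_i+\theta_j)-\delta)$ against $E_iA^*E_j$, and then invoking Lemma~\ref{lem:polypart} at $s=1$ for $|i-j|\leq 1$ and Lemma~\ref{lem:triplep} for $|i-j|>1$. Your explicit monomial-by-monomial computation of $E_iTE_j$ and the remark about the shared $\beta$ are just finer-grained versions of what the paper leaves implicit.
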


\begin{proof}
Let $\Delta_1$ denote the expression of the left-hand side of (\ref{eq:qdolangrady199n}). We show $\Delta_1=0$. For $0\leq i,j \leq d$, one finds $E_i \Delta_1 E_j = p_1(\theta_i,\theta_j) \ E_i A^* E_j $ with
\beqa
p_1(x,y) = (x-y) (x^2 - \beta xy +y^2 - \gamma(x+y) -\delta)\ .
\eeqa
Take $\beta=\beta_1$, $\gamma=\gamma_1$ and $\delta=\delta_1$. For each $i,j$, according to Lemma \ref{lem:polypart} $p_1(\theta_i,\theta_j)=0$ if $|i-j|\leq 1$ and according to Lemma \ref{lem:triplep}, $E_i A^* E_j =0$  if $|i-j|> 1$. It implies (\ref{eq:qdolangrady199n}). Similar arguments are used to show (\ref{eq:qdolangrady2S99n}).
\end{proof}

Higher order relations can be constructed along the same line, using the properties of tridiagonal pairs described in the previous subsections. 
\begin{defn}
Let $x,y$ denote commuting indeterminates. For each positive integer $r$, we define the polynomials $p_r(x,y),p^*_r(x,y)$ as follows:
\beqa
p_r(x,y) = (x-y)\prod_{s=1}^{r} (x^2-\beta_s xy +y^2 - \gamma_s (x+y) -\delta_s)\ ,\label{defpoly}\\
p^*_r(x,y) = (x-y)\prod_{s=1}^{r} (x^2-\beta_s xy +y^2 - \gamma^*_s (x+y) -\delta^*_s)\ .
\eeqa
We observe $p_r(x,y)$ and $p^*_r(x,y)$ have a total degree $2r+1$ in $x,y$.
\end{defn}

\begin{lem}\label{lem:polyroot} For each positive integer $r$,
\beqa
p_r(\theta_i,\theta_j)=0 \quad  \mbox{and} \quad  p^*_r(\theta^*_i,\theta^*_j)=0 \qquad \mbox{if} \quad |i-j|\leq r \ , (0\leq i,j\leq d)\ .
\eeqa
\end{lem}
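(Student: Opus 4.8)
The plan is to prove that $p_r(\theta_i,\theta_j)=0$ whenever $|i-j|\le r$ by exhibiting, for each such pair $(i,j)$, an explicit factor of $p_r$ that vanishes. The key observation is the structure of the definition (\ref{defpoly}): the polynomial $p_r(x,y)$ is a product consisting of the linear factor $(x-y)$ together with $r$ quadratic factors indexed by $s=1,\dots,r$, where the $s$-th quadratic factor is precisely the left-hand side of the first identity in \lemref{lem:polypart}. So the entire content of \lemref{lem:polypart} is that the quadratic factor labelled $s$ annihilates $(\theta_i,\theta_j)$ exactly when $|i-j|=s$.

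First I would dispose of the case $i=j$: here the linear factor $(x-y)$ evaluates to $\theta_i-\theta_j=0$, so $p_r(\theta_i,\theta_j)=0$ immediately, and likewise $p^*_r(\theta^*_i,\theta^*_j)=0$. This handles $|i-j|=0$. Next, for $1\le |i-j|\le r$, set $s:=|i-j|$; then $s$ is one of the indices appearing in the product defining $p_r$, since $s$ ranges over $1,\dots,r$. By \lemref{lem:polypart} applied with this value of $s$, the factor $\theta_i^2-\beta_s\theta_i\theta_j+\theta_j^2-\gamma_s(\theta_i+\theta_j)-\delta_s$ equals zero. Since $p_r(\theta_i,\theta_j)$ is a product that contains this vanishing factor, the whole product vanishes. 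The starred statement follows by the identical argument, using the second identity of \lemref{lem:polypart} and the factors of $p^*_r$ built from $\gamma^*_s,\delta^*_s$.

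I do not anticipate a genuine obstacle here: the result is essentially a reorganization of \lemref{lem:polypart}, and the only thing to check is the bookkeeping that the index $s=|i-j|$ indeed lies in the range $\{1,\dots,r\}$ under the hypothesis $|i-j|\le r$ (together with $i\ne j$), which is immediate. The mildly delicate point worth stating cleanly is that \lemref{lem:polypart} gives vanishing of the $s$-th factor \emph{only} at $|i-j|=s$, not at all $|i-j|\le s$; but for our purposes this is more than enough, since we just need one matching factor to kill the product. This is exactly why the product in (\ref{defpoly}) runs over all $s$ from $1$ to $r$ rather than over a single value: it guarantees that every separation $|i-j|$ in the range $1,\dots,r$ is covered by its own factor. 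I would therefore present the proof in two short lines—one invoking the linear factor for $i=j$, one invoking the matching quadratic factor for $i\ne j$—and note that the starred case is verbatim the same.
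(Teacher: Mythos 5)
Your proof is correct and is precisely the argument the paper leaves implicit (its proof of this lemma is simply ``Immediate''): the case $i=j$ is killed by the factor $(x-y)$, and the case $|i-j|=s$ with $1\le s\le r$ is killed by the $s$-th quadratic factor via Lemma~\ref{lem:polypart}. Nothing to add.
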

\begin{proof}
Immediate.
\end{proof}
\begin{thm}
\label{hqdg} 
For each positive integer $r$,
\beqa
\sum_{i,j=0}^{i+j\leq 2r+1} a_{ij} A^i {A^*}^rA^j =0 \ , \qquad \sum_{i,j=0}^{i+j\leq 2r+1} a^*_{ij} {A^*}^i {A}^r{A^*}^j =0 \  \label{hqdgr}
\eeqa
where the scalars $a_{ij} ,a^*_{ij}$ are defined by:
\beqa
p_r(x,y)=  \sum_{i,j=0}^{i+j\leq 2r+1} a_{ij} x^i y^j \quad  \mbox{and} \quad  p^*_r(x,y)=\sum_{i,j=0}^{i+j\leq 2r+1} a^*_{ij} x^i y^j \ .\label{polyr}
\eeqa
\end{thm}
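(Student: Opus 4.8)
The plan is to mimic exactly the proof of Theorem~\ref{eq:lastchancedolangradyS99}, replacing the linear polynomial $p_1$ by the higher-degree polynomial $p_r$ from Definition~\ref{defpoly}. Let $\Delta_r$ denote the left-hand side of the first relation in (\ref{hqdgr}), namely $\Delta_r=\sum_{i,j=0}^{i+j\leq 2r+1} a_{ij} A^i {A^*}^r A^j$, where the $a_{ij}$ are read off from the expansion (\ref{polyr}). First I would sandwich $\Delta_r$ between the primitive idempotents $E_i,E_j$ of $A$. Using property (iv) of the TD system, $A=\sum_k \theta_k E_k$, together with the orthogonality $E_kE_l=\delta_{kl}E_k$ and $I=\sum_k E_k$, the factors $A^i$ and $A^j$ act as scalars $\theta_i^{\,i}$ and $\theta_j^{\,j}$ on the left and right, so that
\begin{equation}
E_i\Delta_r E_j=\Bigl(\sum_{m,n}a_{mn}\theta_i^{\,m}\theta_j^{\,n}\Bigr)E_i{A^*}^r E_j=p_r(\theta_i,\theta_j)\,E_i{A^*}^r E_j.
\end{equation}
This is the key computation and it reduces the whole statement to a clean factorization of the operator problem into a scalar factor times an idempotent-sandwiched monomial.

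Next I would invoke the two complementary vanishing results. On one hand, Lemma~\ref{lem:polyroot} gives $p_r(\theta_i,\theta_j)=0$ whenever $|i-j|\leq r$. On the other hand, Lemma~\ref{lem:triplep}(ii) gives $E_i{A^*}^r E_j=0$ whenever $|i-j|>r$. Since for any pair $(i,j)$ with $0\leq i,j\leq d$ exactly one of the conditions $|i-j|\leq r$ or $|i-j|>r$ holds, in every case at least one of the two factors in $E_i\Delta_r E_j=p_r(\theta_i,\theta_j)\,E_i{A^*}^r E_j$ vanishes. Hence $E_i\Delta_r E_j=0$ for all $i,j$. Summing over $i,j$ and using $I=\sum_i E_i$ twice yields $\Delta_r=\sum_{i,j}E_i\Delta_r E_j=0$, which is the first relation in (\ref{hqdgr}). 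The second relation follows by the symmetric argument with the roles of $A$ and $A^*$ interchanged, using the starred polynomial $p^*_r$, the starred eigenvalues $\theta^*_i$, the starred idempotents $E^*_i$, and Lemma~\ref{lem:triplep}(i) in place of (ii).

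The only point requiring mild care is the bookkeeping in the sandwiched expansion: one must confirm that inserting $I=\sum E_k$ between the three blocks $A^i$, ${A^*}^r$, and $A^j$ genuinely collapses the $A$-powers to the scalars $\theta_i^{\,m}$ and $\theta_j^{\,n}$ while leaving the middle block ${A^*}^r$ untouched, so that the coefficient that emerges is precisely the polynomial value $p_r(\theta_i,\theta_j)$ with $p_r$ as defined in (\ref{polyr}). I expect this to be the main (though still routine) obstacle, essentially an indexing verification; once it is in place the argument is a verbatim higher-order copy of the $r=1$ proof, and no genuinely new idea beyond Lemmas~\ref{lem:polyroot} and~\ref{lem:triplep} is needed.
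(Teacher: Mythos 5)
Your proposal is correct and follows essentially the same route as the paper's own proof: sandwiching $\Delta_r$ between the primitive idempotents to obtain $E_i\Delta_r E_j=p_r(\theta_i,\theta_j)\,E_i{A^*}^rE_j$, then combining Lemma~\ref{lem:polyroot} (for $|i-j|\leq r$) with Lemma~\ref{lem:triplep} (for $|i-j|>r$) and summing over the resolutions of the identity. The paper states this more tersely but the argument is identical, including the symmetric treatment of the starred relation.
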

\begin{proof}
Let $\Delta_r$ denote the expression of the left-hand side of the first equation of (\ref{hqdgr}). We show $\Delta_r=0$. For $0\leq i,j \leq d$, one finds $E_i \Delta_r E_j = p_r(\theta_i,\theta_j) \ E_i {A^*}^r E_j $ with (\ref{polyr}). According to Lemma \ref{lem:polyroot} and Lemma  \ref{lem:triplep}, it follows $\Delta_r=0$. Similar arguments are used to show the second equation of (\ref{hqdgr}).
\end{proof}

\subsection{Higher order $q-$Dolan-Grady relations} As a straightorward application, we are now interested by the $r-th$ higher order tridiagonal relations which special cases for $r=1$ produce the defining relations of the $q-$Onsager algebra (\ref{qDG}). 

\begin{defn} Consider a TD pair $A,A^*$  of $q-$Racah type with eigenvalues such that $\alpha=\alpha^*=0$.  Assume $r=1$.  The corresponding tridiagonal relations (\ref{eq:qdolangrady199n}), (\ref{eq:qdolangrady2S99n}) are called the $q-$Dolan-Grady relations.
\end{defn}

\begin{example} For a TD pair $A,A^*$  of $q-$Racah type with eigenvalues such that $\alpha=\alpha^*=0$,  
 the parameter sequence is given by $\beta_1= q^{2} + q^{-2}\ , \ \gamma_1=\gamma^*_1=0\ , \   \delta_1= - bc(q^{2} - q^{-2})^2 \ ,\ \delta^*_1= - b^*c^*(q^{2} - q^{-2})^2$. Define $\delta_1=\rho_0$, $\delta_1^*=\rho_1$. The $q-$Dolan-Grady relations
are given by:
\ben
\quad \sum_{j=0}^{3} (-1)^j  \left[ \begin{array}{c} 3 \\  j \end{array}\right]_q   A^{3-j} {A^*} A^{j} - \rho_0 (AA^*-A^*A) &=& 0\ ,\label{qDG1p} \\
\sum_{j=0}^{3} (-1)^j  \left[ \begin{array}{c} 3 \\ j \end{array}\right]_q   {A^*}^{3-j} {A} {A^*}^{j} - \rho_1 (A^*A-AA^*) &=& 0\ . \label{qDG2p}
\een
\end{example}
\begin{rem} The relations (\ref{qDG1p}), (\ref{qDG2p}) are the defining relations of the $q-$Onsager algebra (\ref{qDG}).
\end{rem}

\begin{defn}  Consider a TD pair $A,A^*$  of $q-$Racah type with eigenvalues such that $\alpha=\alpha^*=0$.  For any positive integer $r$, the corresponding higher order tridiagonal relations (\ref{hqdgr}) are called the higher order $q-$Dolan-Grady relations.
\end{defn}

\begin{example}\label{exfinr} For a TD pair $A,A^*$  of $q-$Racah type with eigenvalues such that $\alpha=\alpha^*=0$,  the higher order $q-$Dolan-Grady relations are given by (\ref{qDGfinr}) with the identification $\rho_0=\delta_1$ and $\rho_1=\delta^*_1$.
\end{example}
\begin{proof}  For $\alpha=\alpha^*=0$ in (\ref{eq:const1}), (\ref{eq:const2}), from Lemma {\ref{lem:polypart}} one finds $\beta_s= q^{2s} + q^{-2s}$\ , \ $\gamma_s=\gamma^*_s=0$\ , \   $\delta_s= - bc(q^{2s} - q^{-2s})^2$ \ ,\ $\delta^*_s= - b^*c^*(q^{2s} - q^{-2s})^2$ \ . Then, the first polynomial generating function (\ref{defpoly}) reads:
\beqa
p_r(x,y) = (x-y)\prod_{s=1}^{r} \left(x^2- \frac{[2s]_{q^2}}{[s]_{q^2}}xy +y^2  -   [s]^2_{q^2} \rho_0\right)\ ,\label{polyqons}
\eeqa
where the notation $\beta_s=[2s]_{q^2}/[s]_{q^2}$ and $\delta_s/\rho_0=[s]^2_{q^2}$ has been introduced. Expanding the polynomial in the variables $x,y$ as (\ref{polyr}), one shows that the coefficients $a_{ij}$ in (\ref{polyr}) take the form:
\beqa
 a_{2r+1-2p - j \ j}= (-1)^{j+p} \rho^{p}_0 c_j^{[r,p]}
\eeqa
where $c_j^{[r,p]}$  solely depend on $q$, and are vanishing otherwise. By induction, one finds that they are given by (\ref{cfinr}).
Replacing $\rho_0\rightarrow \rho_1$, the second relation in (\ref{qDGfinr})  follows.
\end{proof}

For $r=2,3$, the higher order $q-$Dolan-Grady relations (\ref{qDGfinr}) can be constructed in a straightforward manner: 
\begin{example}\label{exr2} The first example of higher order $q-$Dolan-Grady relations is given by (\ref{qDGfinr})  for $r=2$ with:
\beqa
c^{[2,0]}_0&=&1\ ,\quad c^{[2,0]}_1= 1+ [2]_{q^2} + \frac{[4]_{q^2}}{[2]_{q^2}} \equiv \left[ \begin{array}{c} 5\\ 1 \end{array}\right]_q  \ ,\qquad c^{[2,0]}_2= 2+ [2]_{q^2}+ [4]_{q^2} + \frac{[4]_{q^2}}{[2]_{q^2}} \equiv  \left[ \begin{array}{c} 5\\ 2 \end{array}\right]_q \ ,\nonumber \\
c^{[2,1]}_0&=& 1 + [2]^2_{q^2} \equiv q^4+q^{-4} +3 \ , \qquad  c^{[2,1]}_1=  1 + [2]^2_{q^2} +  \frac{[4]_{q^2}}{[2]_{q^2}} +  [2]^3_{q^2} \equiv [5]_q[3]_q\ ,\nonumber\\
c^{[2,2]}_0&=& [2]^2_{q^2} \equiv (q^2+q^{-2})^2 \ . \nonumber 
\eeqa
\end{example}
\begin{example}\label{exr3} The second example of higher order $q-$Dolan-Grady relations is given by (\ref{qDGfinr})  for $r=3$ with:
\beqa
c_{j}^{[3,0]} &=&  \left[ \begin{array}{c} 7 \\ j 
\end{array}\right]_q , \quad j=0,...,7\ ,\nonumber\\     
c^{[3,1]}_0&=& 1 + [2]^2_{q^2} + [3]^2_{q^2} \ ,\nonumber\\
c^{[3,1]}_1&=& 1 + [2]^2_{q^2} + [3]^2_{q^2} + (1+ [3]^2_{q^2})\frac{[4]_{q^2}}{[2]_{q^2}} +  ([2]^2_{q^2} + [3]^2_{q^2})[2]_{q^2}  + (1+ [2]^2_{q^2})\frac{[6]_{q^2}}{[3]_{q^2}} \ ,\nonumber\\
c^{[3,1]}_2&=& 2(1 + [2]^2_{q^2} + [3]^2_{q^2}) + (1+ [3]^2_{q^2})\frac{[4]_{q^2}}{[2]_{q^2}} +  ([2]^2_{q^2} + [3]^2_{q^2})[2]_{q^2}  + (1+ [2]^2_{q^2})\frac{[6]_{q^2}}{[3]_{q^2}} \ \nonumber\\
&&+\ \  \frac{[4]_{q^2}[6]_{q^2}}{[2]_{q^2}[3]_{q^2}} + \frac{[2]^3_{q^2}[6]_{q^2}}{[3]_{q^2}}  + [3]^2_{q^2}[4]_{q^2} \ ,\nonumber\\
c^{[3,2]}_0&=& [2]^2_{q^2} + [3]^2_{q^2} +  [2]^2_{q^2}[3]^2_{q^2}\ ,\nonumber\\
c^{[3,2]}_1&=& [2]^2_{q^2} + [3]^2_{q^2} +  [2]^2_{q^2}[3]^2_{q^2} + [2]^2_{q^2}\frac{[6]_{q^2}}{[3]_{q^2}} + [3]^2_{q^2}\frac{[4]_{q^2}}{[2]_{q^2}} + [2]^3_{q^2}[3]^2_{q^2} ,\nonumber\\
c^{[3,3]}_0&=& [2]^2_{q^2}[3]^2_{q^2} \ .\nonumber
\eeqa
\end{example}
%
%
\vspace{2mm}

To end up this Section, let us consider the family of relations satisfied by a TD pair of $q-$Racah type such that\footnote{For instance, choose $b,b^*=0$ and/or $c,c^*=0$ in (\ref{eq:const1}), (\ref{eq:const2}).} $\rho_0=\rho_1=0$. In this special case, the  polynomial generating function for the coefficients (\ref{polyqons}) factorizes:
\beqa
p_r(x,y) = x^{2r+1}\prod_{s=-r}^{r} \left(1- q^{2s}\frac{y}{x}\right)\ .\label{polyqserre}
\eeqa
According to the $q-$binomial theorem, one has:
\beqa
(1-u)(1-q^2u) \cdot \cdot \cdot (1-q^{4r}u) = \sum_{j=0}^{2r+1}   \left[ \begin{array}{c} 2r+1\\ j \end{array}\right]_q \ (-1)^j q^{2jr} u^j \ . \label{qbinom}
\eeqa
If we denote the l.h.s. of (\ref{qbinom}) by $g_{q}(u)$, observe $p_r(x,y) = x^{2r+1}g_{q}(q^{-2r}\frac{y}{x})$. As a consequence, for $\rho_0=\rho_1=0$ the higher order $q-$Dolan-Grady relations satisfied by the corresponding TD pair simplify to the well-known Lusztig's higher order  $q-$Serre relations \cite{Luzt}:
\beqa
\sum_{j=0}^{2r+1}  (-1)^{j}  \left[ \begin{array}{c} 2r+1\\ j \end{array}\right]_q  \,  A^{2 r+1-j} {A^*}^{r} A^{j}&=&0\  \label{qserrefinr} \ ,\\
\sum_{j=0}^{2r+1} (-1)^{j}  \left[ \begin{array}{c} 2r+1\\ j \end{array}\right]_q  \,  {A^*}^{2 r+1-j} {A}^{r} {A^*}^{j}&=&0\  .
\ \nonumber
\eeqa
\vspace{2mm}

\section{Higher order relations for the $q-$Onsager algebra; recursion for generating the coefficients}
In the previous Section, it was shown that every TD pair of $q-$Racah type such that $\alpha=\alpha^*=0$ satisfies the $r-th$ higher order $q-$Dolan-Grady relations (\ref{qDGfinr}) with (\ref{cfinr}). For the special case $r=1$, these relations coincide with the defining relations of the $q-$Onsager algebra (\ref{qDG}). According to these results, we propose the following conjecture:
\begin{conj}\label{conj1} Let $A$, $A^*$ be the fundamental generators of the $q-$Onsager algebra (\ref{qDG}).  The relations (\ref{qDGfinr}) with (\ref{cfinr}) hold.
\end{conj}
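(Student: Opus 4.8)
The plan is to prove Conjecture~\ref{conj1} by establishing that the two-variable polynomial identity underlying Theorem~\ref{hqdg} continues to hold when $A,A^*$ are the \emph{abstract} generators of the $q-$Onsager algebra, not merely a concrete TD pair. The subtle point is that Theorem~\ref{hqdg} only proves the relations \emph{representation-theoretically}: for a given TD pair one has $E_i\Delta_r E_j = p_r(\theta_i,\theta_j)\,E_i{A^*}^r E_j$, and the relation follows by the vanishing of $p_r$ for $|i-j|\le r$ together with \lemref{lem:triplep} for $|i-j|>r$. This argument verifies that every finite-dimensional TD pair of $q-$Racah type with $\alpha=\alpha^*=0$ satisfies \er{qDGfinr}, but it does not \emph{a priori} show that \er{qDGfinr} is a consequence of the defining relations \er{qDG} alone. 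Closing that gap is exactly the content of the conjecture.

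The approach I would take is to reduce the conjecture to a statement about the algebra generated by $A,A^*$ modulo the ideal generated by \er{qDG}. First I would fix $r$ and consider the element $\Delta_r := \sum_{p,j}(-1)^{j+p}\rho_0^p\,c_j^{[r,p]}\,A^{2r+1-2p-j}{A^*}^r A^j$ as an element of the free algebra on $A,A^*$. The goal is to show $\Delta_r$ lies in the two-sided ideal generated by the $q-$Dolan-Grady relators. The natural strategy is induction on $r$: assume the relation holds at levels $<r$, and express $\Delta_r$ in terms of lower relations by peeling off one factor of the basic relator \er{qDG1p}. Concretely, I would try to write a recursion that produces $p_r(x,y)$ from $p_{r-1}(x,y)$ via multiplication by the quadratic factor $\big(x^2-\beta_r xy + y^2 - \delta_r\big)$, lift this polynomial identity to the noncommutative setting, and track how the quadratic factor becomes, upon substitution $x\mapsto A$, $y\mapsto A^*$-conjugation, a combination of the degree-three $q-$Onsager relator and lower-order terms. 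This is the source of the recursive formulae for $c_k^{[r,p]}$ advertised in the introduction, and it is the content of the inductive argument the authors announce for Section~3.

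The main obstacle, which I expect to dominate the work, is the noncommutativity: the clean factorization $p_r(x,y)=(x-y)\prod_s(\cdots)$ holds for commuting indeterminates, but $A$ and $A^*$ do not commute, so one cannot simply substitute and multiply factors. The passage from the commutative generating function to an honest identity in the $q-$Onsager algebra requires controlling the reordering terms that arise each time one moves an $A$ past an $A^*$, and showing that all such correction terms are absorbed either into lower-$r$ relations (handled by the induction hypothesis) or into shifts of the coefficients $c_j^{[r,p]}$ that precisely match \er{cfinr}. Verifying that the bookkeeping closes — i.e.\ that the recursion for the $c_j^{[r,p]}$ derived abstractly from \er{qDG} agrees with the combinatorial formula \er{cfinr} extracted from $p_r$ — is the crux. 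I would first carry this out explicitly for $r=2$ and $r=3$ (matching \exref{exr2} and \exref{exr3}) to pin down the recursion and confirm the symmetry $c_{2(r-p)+1-j}^{[r,p]}=c_j^{[r,p]}$, then attempt the general inductive step, with the expectation that a fully rigorous treatment for all $r$ may remain out of reach — consistent with the statement being offered as a conjecture checked for several values of $r\ge 4$.
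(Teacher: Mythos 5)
Your proposal follows essentially the same route as the paper's Section~3: a direct proof for $r=2,3$ by ordering monomials with the defining relation, then an inductive step that peels off the basic relator to produce recursive formulae for the $c_j^{[r,p]}$, verified against \er{cfinr} for several $r\geq 4$, with the fully general case left open. You also correctly identify the actual gap (that Theorem~\ref{hqdg} only yields the relations for concrete TD pairs, not as consequences of \er{qDG} in the abstract algebra), which is precisely why the statement is offered as a conjecture.
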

The purpose of this Section is to prove the conjecture for $r=2,3$. Then, using an inductive argument we will study the general structure and derive recursion formulae - independently of the results of the previous Section - for the coefficients $c^{[r,p]}_j$. As we will discuss at the end of this Section, a detailed comparison with the coefficients obtained in the previous Section for several values of $r\geq 4$ supports the conjecture. \vspace{1mm}

Let $A$, $A^*$ be the fundamental generators of the $q-$Onsager algebra (\ref{qDG}). Observe that the defining relations can be written:
\ben
\quad \sum_{i=0}^{3} (-1)^i  \left[ \begin{array}{c} 3 \\  i \end{array}\right]_q   A^{3-i} {A^*} A^{i} - \rho_0 (AA^*-A^*A) &=& 0\ ,\label{qDG1} \\
\sum_{i=0}^{3} (-1)^i  \left[ \begin{array}{c} 3 \\ i \end{array}\right]_q   {A^*}^{3-i} {A} {A^*}^{i} - \rho_1 (A^*A-AA^*) &=& 0\ . \label{qDG2}
\een
By analogy with Lusztig's higher order $q-$Serre relations, we are interested in more complicated linear combinations of monomials  of the type $A^n{A^*}^rA^m$, $n+m=2r+1,2r-1,...,1$,  that are vanishing. The defining relations (\ref{qDG1}), (\ref{qDG2}) correspond to the case $r=1$ of (\ref{qDGfinr}). Below, successively we derive the relations (\ref{qDGfinr}) for $r=2,3$ and study the generic case by induction. 

\subsection{Proof of the relations for $r=2$} Consider the simplest example beyond (\ref{qDG1}): we are looking for a linear relation between monomials of the type $A^n{A^*}^2A^m$, $n+m=5,3,1$. According to the defining relations (\ref{qDG1}), note that the monomial $A^3A^*$ can be written as:
\beqa
A^3A^* = \al A^2A^*A - \al AA^*A^2 + A^*A^3 + \rho_0 (AA^* - A^*A)\ \quad \mbox{with}\quad \alpha=[3]_q \ .\label{mon1}
\eeqa
Multiplying from the left by $A$ or $A^2$, the corresponding expressions can be ordered as follows: each time a monomial of the form $A^n A^*A^m$ with $n\geq 3$ arise, it is reduced using (\ref{mon1}). It follows:
\beqa
A^4A^* &=& (\al^2-\al) A^2A^*A^2 + (1- \al^2) AA^*A^3 + \al A^*A^4 + \rho_0 (A^2A^* - \al A^*A^2 + (\al-1)AA^*A)\ ,\nonumber\\
A^5A^* &=& (\al^3-2\al^2+1) A^2A^*A^3 + \al(-\al^2+\al+1) AA^*A^4 + \al(\al-1) A^*A^5 \nonumber\\
&&+  \ \rho_0 \left((2\al-1)A^2A^*A + \al(\al-3)AA^*A^2 - (\al^2-\al-1) A^*A^3\right) \nonumber \\
&&+ \ \rho_0^2(AA^*-A^*A)\ .\nonumber
\label{mon2}
\eeqa
For our purpose, four different types of monomials may be now considered: $A^5{A^*}^2$, $A^4{A^*}^2A$, $A^3{A^*}^2A^2$ and $A^3{A^*}^2$. Following the ordering prescription, each of these monomials can be reduced as a combination of monomials of the type ($n,m,p,s,t\geq 0$):
\beqa
&&\quad A^n{A^*}^2A^m \ \qquad \ \mbox{with} \quad  \ n\leq 2\ , \ n+m=5,3,1\ ,\label{mongen}\\
 &&\quad A^p A^* A^s A^* A^t \quad \  \mbox{with} \quad  \ p\leq 2\ , \ s\geq 1\ ,\ p+s+t=5,3,1 \ .\nonumber
\eeqa
For instance, the monomial  $A^5{A^*}^2$ is reduced to:
\beqa
A^5{A^*}^2 &=& (\al^3-2\al^2+1) \left(\al A^2A^*A^2A^*A - \al A^2A^*AA^*A^2 + A^2{A^*}^2 A^3\right) \nonumber\\
&&-\ (\al^3-\al^2-\al) \left((\al^2-\al)AA^*A^2A^*A^2 +(1-\al^2) AA^*AA^*A^3  + \al A{A^*}^2A^4 \right)\nonumber\\
&&+ \ (\al^2-\al)\left( (\al^3-2\al^2+1)  {A^*}A^2A^*A^3  - (\al^3-\al^2-1)  {A^*}AA^*A^4 + \al(\al-1) {A^*}^2A^5  \right)\nonumber \\
&& + \ \rho_0 (\al^3-2\al^2+2\al)\left(A^2A^*AA^* - AA^*A^2A^* + A^*A^2A^*A\right)\nonumber\\
&& + \ \rho_0 (-\al^3+2\al^2-1)\left(A^2{A^*}^2A + \al AA^*AA^*A\right)\nonumber\\
&& + \ \rho_0 (\al^4-\al^3-\al^2)\left(A{A^*}^2A^2\right)\nonumber\\
&& + \ \rho_0 (\al^4-3\al^3+2\al^2-\al)\left(A^*{A}A^*A^2\right)\nonumber\\
&& + \ \rho_0 (-\al^4+2\al^3-\al^2+1)\left({A^*}^2A^3\right) \nonumber\\
&& + \ \rho_0^2 \big[A,{A^*}^2\big] \ . \nonumber
\eeqa
The two other monomials $A^4{A^*}^2A$, $A^3{A^*}^2A^2$ are also ordered using (\ref{mon1}). One obtains:
\beqa
A^4{A^*}^2A &=& (\al^2-\al) \left( A^2A^*A^2A^*A + \al A^*A^2A^*A^3\right) + \ \al^2{A^*}^2A^5   \nonumber\\
&&+\ (\al^2-1) \left(\al AA^*AA^*A^3- \al AA^*A^2A^*A^2 - \al A^*AA^*A^4 - A{A^*}^2A^4\right)\nonumber\\
&& + \ \rho_0 \left(  A^2{A^*}^2 A  - (1-\al^2) A{A^*}^2A^2 - \al^2 {A^*}^2 A^3\right)
\nonumber \\
&&+ \ \rho_0 \left(\al^2-\al \right) \left(A^*AA^*A^2-AA^*AA^*A \right)\ ,\nonumber \\
A^3{A^*}^2A^2 &=& \al \left(A^2A^*AA^*A^2 - AA^*A^2A^*A^2 + A^*A^2A^*A^3 - A^*AA^*A^4  \right) +{A^*}^2A^5 + \rho_0(A {A^*}^2A^2 - {A^*}^2A^3) \nonumber \ .
\eeqa
The ordered expression for the fourth monomial $A^3{A^*}^2$ directly follows from (\ref{mon1}). 
Having the explicit ordered expressions of $A^5{A^*}^2$, $A^4{A^*}^2A$, $A^3{A^*}^2A^2$ and $A^3{A^*}^2$ in terms of monomials of the type (\ref{mongen}), let us consider the combination:
\beqa
f_2(A,A^*)= c_0^{[2,0]} A^5{A^*}^2 - c^{[2,0]}_1 A^4{A^*}^2A +   c^{[2,0]}_2 A^3{A^*}^2A^2 -  \rho_0 c^{[2,1]}_0 A^3{A^*}^2 \ 
\eeqa
with unknown coefficients $c^{[2,0]}_j$, $j=1,2$, $c^{[2,1]}_0$, and normalization $c_0^{[2,0]}=1$.
After simplifications, the combination takes the ordered form:
\beqa
f_2(A,A^*)=   c^{[2,0]}_3 A^2{A^*}^2A^3 - c^{[2,0]}_4 A{A^*}^2A^4 + c^{[2,0]}_5 {A^*}^2A^5  \ + \ g_2(A,A^*)  \ 
\eeqa
where 
\beqa
 c^{[2,0]}_3 = \al^3-2\al^2+1 \ , \quad c^{[2,0]}_4 = \al^2(\al^2-\al-1)+ c^{[2,0]}_1 (1-\al^2) \ , \quad c^{[2,0]}_5 = (\al^2-\al)^2 - \al^2c^{[2,0]}_1 + c^{[2,0]}_2 \ .\nonumber
\eeqa
Inspired by the structure of Lusztig's higher order $q-$Serre relations, consider the conditions under which the combination $g_2(A,A^*)$ never contains monomials of the form $A^p A^* A^s A^* A^t $ ($p\leq 2,\ s\geq 1$). At the lowest order in $\rho_0$, given a particular monomial  the condition under which its coefficient in $g_2(A,A^*)$ is vanishing is given by:
\beqa
&& A^2A^*A^2A^*A: \ \al^3-2\al^2+1 - c^{[2,0]}_1(\al-1)=0 \ ,\nonumber\\
&& A^2A^*AA^*A^2: \ -\al^3+2\al^2-1 + c^{[2,0]}_2 =0\ , \nonumber \\
&& AA^*A^2A^*A^2:  \ (\al-1)(-\al^3+\al^2+\al)  - c^{[2,0]}_1(1-\al^2) - c^{[2,0]}_2 =0 \ ,\nonumber\\
&& AA^*AA^*A^3:  \ (1-\al^2)(-\al^3+\al^2+\al)+ c^{[2,0]}_1 \al(1-\al^2) =0\ , \nonumber\\
&& A^*A^2A^*A^3:  \  (\al-1)(\al^3-2\al^2+1)  - c^{[2,0]}_1(\al^2-\alpha) + c^{[2,0]}_2 =0 \ ,\nonumber\\
&& A^*AA^*A^4:  \ (\al-1)(-\al^3+\al^2+\al)- c^{[2,0]}_1 (1-\al^2) - c^{[2,0]}_2 =0\ , \nonumber
\eeqa
Recall that $\al=[3]_q$. The solution  $\{c^{[2,0]}_j$, $j=1,2$\} to this system of equations exists, and it is unique. In terms of $q-$binomials, it reads:
\beqa
c^{[2,0]}_1= \left[ \begin{array}{c} 5\\ 1 \end{array}\right]_q  \ ,\qquad c^{[2,0]}_2= \left[ \begin{array}{c} 5\\ 2 \end{array}\right]_q \ .
\eeqa
At the next order $\rho_0$, the conditions such that monomials of the type $A^p A^* A^s A^* A^t $ with $p+s+t\leq 3$ and $s\geq 1$ yield to:
\beqa
c^{[2,1]}_0=q^4+q^{-4}+3\ .\nonumber
\eeqa
All other coefficients of the monomials $A^n{A^*}^2A^m \quad \mbox{for} \quad  n+m=5,3,1$ are explicitly determined in terms of $c^{[2,0]}_j$ ($j=0,1,2$), $c^{[2,1]}_0$, $\rho_0$ and $\rho_0^2$. Based on these results, we conclude that the $q-$Dolan-Grady relation (\ref{qDG1}) implies the existence of a unique linear relation between monomials of the type $A^n{A^*}^2A^m \quad \mbox{with} \quad  n+m=5,3,1$. This relation can be seen as a $\rho_0-$deformed analogue of the simplest higher order $q-$Serre relation.  Explicitly, one finds:
\beqa
\qquad \sum\limits_{j = 0}^5 {{{\left( { - 1} \right)}^j}\left[ {\begin{array}{*{20}{c}}
   5  \\
   j  \\
\end{array}} \right]} {A^{5 - j}}{A^*}^2{A^j} &=& 
\rho_0 \left( (q^4 + q^{- 4} + 3)(A^3 {A^*}^2-  {A^{*2}}{A^3}) - \left[ 5 \right]_q\left[ 3 \right]_q(A^2{A^{*2}}A - A{A^{*2}}A^2) \right) \label{qDG11} \\ 
& &- \ {\rho_0 ^2}{\left( {{q^2} + {q^{ - 2}}} \right)^2}\left(  A{A^{*2} - A^{*2}}A  \right)\ .\nonumber
\eeqa
Using the automorphism $A\leftrightarrow A^*$ and $\rho_0\leftrightarrow \rho_1$  which exchanges (\ref{qDG1}) and (\ref{qDG2}), the second relation generalizing (\ref{qDG2}) is obtained.  The coefficients coincide with the ones given in Example \ref{exr2}, which proves conjecture \ref{conj1} for $r=2$.
For the special undeformed case $\rho_0=\rho_1=0$, note that both relations reduce to the simplest examples of higher order $q-$Serre relations. \vspace{1mm}

\subsection{Proof of the relations for $r=3$} Following a similar analysis, the next  example of higher order $q-$Dolan-Grady relations can be also derived. To this end, one is looking for a linear relation between monomials of the type $A^n{A^*}^3A^m$, $n+m=7,5,3,1$. Assume the $q-$Dolan-Grady relation (\ref{qDG1}) and its simplest consequence (\ref{qDG11}). Write the four monomials:
\beqa
A^7 {A^*}^3 = (A^7{A^*}^2)A^*\ , \quad  A^6{A^*}^3 A = (A^6{A^*}^2)A^*A \ ,\quad  A^5{A^*}^3A^2 = (A^5 {A^*}^2)A^*A^2\ ,\quad A^5{A^*}^3 = (A^5 {A^*}^2)A^*     \ .\nonumber
\eeqa
Using (\ref{qDG11}) and then (\ref{qDG1}), they can be expressed solely in terms of   monomials of the type:
\beqa
&&\qquad A^n{A^*}^3A^m \quad \qquad \mbox{with} \quad n\leq 4\ , \ n+m=7,5,3,1\ , \label{mongen3}\\
&&\qquad A^p {A^*}^2 A^s A^* A^t \quad  \mbox{with} \quad  p\leq 4\ ,\ s\geq 1\ ,\ p+s+t=7,5,3,1 \ .\nonumber
\eeqa
Then, introduce the combination 
\beqa
f_3(A,A^*)= c_0^{[3,0]} A^7{A^*}^3 - c^{[3,0]}_1 A^6{A^*}^3A +  c^{[3,0]}_2 A^5{A^*}^3A^2 -  \rho_0 c^{[3,1]}_0 A^5{A^*}^3\   \label{comb3}
\eeqa
with unknown coefficients $c^{[3,0]}_j$, $(j=1,2)$, $c^{[3,1]}_0$  and normalization $c_0^{[3,0]}=1$. By straightforward calculations using the ordered expressions of $A^7{A^*}^3$, $A^6{A^*}^3A$, $A^5{A^*}^3A^2$ and $A^5{A^*}^3$, $f_3(A,A^*)$ is reduced to a combination of monomials of the type (\ref{mongen3}). Note that the coefficients of the monomials $A^n{A^*}^3A^m$ for $n+m=5,3,1$ are of order $\rho_0,\rho_0^2, \rho_0^3$, respectively. Identifying the conditions under which the coefficient of any monomial of the form
\beqa
A^p {A^*}^2 A^s A^* A^t \quad \mbox{with} \quad p\leq 4\ ,\ s\geq 1\ , \quad  \ p+s+t=7,5,3,1\ , \nonumber 
\eeqa
is vanishing, one obtains a system of equations for the coefficients, which solution is unique. Simplifying (\ref{comb3}) according to the explicit solutions $c^{[3,0]}_j$, $j=1,2$ and $c^{[3,1]}_0$, one ends up with the next example of higher order $q-$Dolan-Grady relations. Using the automorphism $A\leftrightarrow A^*$ and $\rho_0\leftrightarrow \rho_1$,
the second relation follows. One finds:  
\ben
&&\sum_{p=0}^{3}\, \rho_0^{p} \,\sum_{j=0}^{7-2p} (-1)^{j+p}  \,c_{j}^{[3,p]}\,  A^{7-2p-j} {A^*}^3 A^{j}=0 \ , \ \label{qDG13}\\
&&\sum_{p=0}^{3}\, \rho_1^{p}\, \sum_{j=0}^{7-2p} (-1)^{j+p} \, c_{j}^{[3,p]} \, {A^*}^{7-2p-j} { A}^3 {A^*}^{j}=0\ \nonumber
\een
where $c_{j}^{[3,p]}=c_{7-2p-j}^{[3,p]}$\ , \ $c_{j}^{[3,0]} =  \left[ \begin{array}{c} 7 \\ j 
\end{array}\right]_q$ and     
\beqa
&& c^{[3,1]}_0= \left( {{q^8} + 3{q^4} + 6 + 3{q^{ - 4}} + {q^{ - 8}}} \right)\ ,\qquad  c^{[3,1]}_1={\left[ 7 \right]_q}\left( {{q^6} + {q^4} + {q^2} + 4 + {q^{ - 2}} + {q^{ - 4}} + {q^{ - 6}}} \right)\ ,\nonumber\\
&&  c^{[3,1]}_2=  \left[ 7 \right]_q (q^2-1+q^{-2})(q^2+q^{-2})(q^4+2q^2+4 + 2q^{-2}+q^{-4})\ ,\nonumber\\
&& c^{[3,2]}_0= ( q^{6} + 2q^2 + 3q^{-2}+ q^{-6}) ( q^{6} + 3q^2 + 2q^{-2}+ q^{-6}) \ ,\nonumber\\
&&  c^{[3,2]}_1=\left[ 7 \right]_q \left( {q^{8}} + {q^{6}} + 4{q^{4}} + {q^2} + 7 + {q^{-2}} + 4{q^{-4}} + q^{-6} + {q^{ - 8}} \right)\ ,\nonumber\\
&& c^{[3,3]}_0=  [2]^2_{q^2}[3]^2_{q^2}\ .\nonumber
\eeqa
It is straightforward to compare the coefficients above with the ones obtained from the expansion of the polynomial generating function $p_3(x,y)$. Although the coefficients above look different, they coincide exactly with the ones reported in Example \ref{exr3}. This proves the conjecture for $r=3$. Again,  for the special case $\rho_0=\rho_1=0$ Lusztig's higher order $q-$Serre relations are recovered. 

\subsection{Relations for $r$ generic} Above examples (\ref{qDG11}), (\ref{qDG13}) suggest that conjecture {\ref{conj1}} holds for $r$ generic.  Looking for a linear relation between monomials of the type $A^n{A^*}^rA^m$, $n+m=2r+1,2r-1,...,1$, 
for $r\geq 1$, relations of the form 
\ben
&&\sum_{p=0}^{r}\, \rho_0^{p} \,\sum_{j=0}^{2(r-p)+1} (-1)^{j+p}  \,c_{j}^{[r,p]}\,  A^{2(r-p)+1-j} {A^*}^r A^{j}=0 \ , \ \label{qDGr}\\
&&\sum_{p=0}^{r}\, \rho_1^{p}\, \sum_{j=0}^{2(r-p)+1} (-1)^{j+p} \, c_{j}^{[r,p]} \, {A^*}^{2(r-p)+1-j} { A}^r {A^*}^{j}=0\ \label{qDGr2}
\een
are expected, provided the elements $A,A^*$ satisfy the $q-$Dolan-Grady relations (\ref{qDG1}) and (\ref{qDG2}). Our aim is now to study these relations in details and obtain recursive formulae for the coefficients $c_{j}^{[r,p]}$. \vspace{1mm} 

In order to study the higher order $q-$Dolan-Grady relations (\ref{qDGr}) for generic values of $r$, we proceed by induction. First, assume  the basic relation (\ref{qDG1}) holds and implies all relations (\ref{qDGr}) up to $r$ which explicit coefficients $c_{j}^{[r,p]}$ in terms of $q$ are assumed to be known. It is the case for $r=2,3$ as shown above. Our aim is to construct the higher order relation associated with $r+1$ and express the coefficients $c_{j}^{[r+1,p]}$, ($j=0,1,...2r+3-2p$, $p=0,...,r+1$) in terms of $c_{j'}^{[r,p']}$ ($j'=0,1,...2r+1-2p'$, $p'=0,...,r$) . Following the steps described for $r=2,3$, from the relation (\ref{qDGr}) we first deduce:
\beqa
A^{2r+1}{A^*}^r &=&-\sum\limits_{j=1}^{2r+1}{(-1)^jc^{[r,0]}_jA^{2r+1-j}{A^*}^rA^j}-\sum\limits_{p=1}^r{\rho_0^p\sum\limits_{j=0}^{2(r-p)+1}{(-1)^{j+p}c^{[r,p]}_jA^{2(r-p)+1-j}{A^*}^rA^j}} \label{eq1} \ ,\\
A^{2r+2}{A^*}^r &=&-\sum\limits_{j=2}^{2r+2}{(-1)^jM^{(r,0)}_jA^{2r+2-j}{A^*}^rA^j}-\sum\limits_{p=1}^r{\rho_0^p\sum\limits_{j=0}^{2(r-p)+2}{(-1)^{j+p}M^{(r,p)}_jA^{2(r-p)+2-j}{A^*}^rA^j}}\label{eq2}\ ,\\ 
A^{2r+3}{A^*}^r &=&-\sum\limits_{j=3}^{2r+3}{(-1)^jN^{(r,0)}_jA^{2r+3-j}{A^*}^rA^j}-\sum\limits_{p=1}^r{\rho_0^p\sum\limits_{j=0}^{2(r-p)+3}{(-1)^{j+p}N^{(r,p)}_jA^{2(r-p)+3-j}{A^*}^rA^j}} \label{eq3}
\eeqa
where the coefficients $M^{(r,p)}_j$, $N^{(r,p)}_j$ are determined recursively in terms of $c_{j}^{[r,p]}$ (see Appendix A). Now, write the four monomials:
\beqa
&& A^{2r+3} {A^*}^{r+1} = (A^{2r+3}{A^*}^r)A^*\ , \quad A^{2r+2} {A^*}^{r+1}A = (A^{2r+2}{A^*}^r)A^*A \ ,\quad  A^{2r+1} {A^*}^{r+1}A^2 = (A^{2r+1}{A^*}^{r})A^*A^2  \  ,\nonumber \\
\qquad && A^{2r+1}{A^*}^{r+1} = (A^{2r+1}{A^*}^r)A^* \ . \nonumber
\eeqa
Using (\ref{eq1})-(\ref{eq3}), they can be expressed solely in terms of:
\beqa
\qquad \quad && A^n{A^*}^{r+1}A^m \qquad\quad \ \mbox{with} \quad  n\leq 2r \ ,\ n+m=2r+3, 2r+1,...,1\ ,\label{mongenr}\\
  \quad && A^p {A^*}^r A^s A^* A^t \quad \quad  \ \mbox{with} \quad  p\leq 2r \ ,\ s\geq 1\ , \ p+s+t=2r+3, 2r+1,...,1 \ .\label{mongenUNr}
\eeqa
It is however clear from (\ref{eq1})-(\ref{eq3}) that each monomial  $A^{2r+3} {A^*}^{r+1}$, $A^{2r+2} {A^*}^{r+1}A$, $A^{2r+1} {A^*}^{r+1}A^2$ and $A^{2r+1}{A^*}^{r+1}$   can be further reduced using (\ref{qDG1}). For instance, 
\beqa
 (A^{2r+3}{A^*}^r)  A^*=-\sum\limits_{j=3}^{2r+3}(-1)^jN^{(r,0)}_jA^{2r+3-j}{A^*}^r\!\!\!\!\!\!\!\underbrace{A^j A^*}_{\mbox{reducible}}-\ \ \sum\limits_{p=1}^r \rho_0^p\sum\limits_{j=0}^{2(r-p)+3}(-1)^{j+p}N^{(r,p)}_jA^{2(r-p)+3-j}{A^*}^r \!\!\!\!\!\!\!\!\!\!\!\!\!\!\!\!\!\underbrace{A^j A^*}_{\mbox{reducible if $j\geq 3$}} \!\!\!\!\!\!\!\!\!\!\!\!\!\!\!\!\ .\label{reducible}\nonumber
\eeqa
According to (\ref{qDG1}), observe that the monomials $A^jA^*$ (for $j$ even or odd) can be written as:
\beqa
A^{2n+2}A^*&=&\sum\limits_{k=0}^n{\sum\limits_{i=0}^2{\rho_0^{n-k}\eta^{(2n+2)}_{k,i}A^{2-i}A^*A^{2k+i}}} \ ,\label{eqmu1} \\
A^{2n+3}A^*&=&\sum\limits_{k=1}^{n+1}{\sum\limits_{i=0}^2{\rho_0^{n+1-k}\eta^{(2n+3)}_{k,i}A^{2-i}A^*A^{2k-1+i}}}+\rho_0^{n+1}(AA^*-A^*A) \ ,\label{eqmu2}
\eeqa
where the coefficients $\eta^{(2n+2)}_{k,i},\eta^{(2n+3)}_{k,i}$ are determined recursively in terms of $q$ (see Appendix A). It follows:
\beqa
&& \! \! \! \! \! \! \! \!\!\!\!\! \qquad A^{2r+3}{A^*}^{r+1}=\sum\limits_{i=1}^{r+1}{N^{(r,0)}_{2i+1}A^{2(r-i)+2}{A^*}^r(\sum\limits_{k=1}^i{\sum\limits_{j=0}^2{\rho_0^{i-k}\eta^{(2i+1)}_{k,j}}A^{2-j}A^*A^{2k-1+j}}+\rho_0^i(AA^*-A^*A))} \label{m1}\\ & &\qquad - \sum\limits_{i=1}^r{N^{(r,0)}_{2i+2}A^{2(r-i)+1}{A^*}^r(\sum\limits_{k=0}^i{\sum\limits_{j=0}^2{\rho_0^{i-k}\eta^{(2i+2)}_{k,j}A^{2-j}A^*A^{2k+j}}})} \nonumber\\ & & \qquad - \sum\limits_{p=1}^r{{(-\rho_0)}^p(N^{(r,p)}_0A^{2(r-p)+3}{A^*}^{r+1}-N^{(r,p)}_1A^{2(r-p)+2}{A^*}^rAA^*+N^{(r,p)}_2A^{2(r-p)+1}{A^*}^rA^2A^*)} \nonumber\\ & & \qquad + \sum\limits_{p=1}^r{{(-\rho_0)}^p\sum\limits_{i=1}^{r-p+1}{N^{(r,p)}_{2i+1}A^{2(r-p-i)+2}{A^*}^r(\sum\limits_{k=1}^i{\sum\limits_{j=0}^2{\rho^{i-k}_0\eta^{(2i+1)}_{k,j}A^{2-j}A^*A^{2k-1+j}}}+\rho_0^i(AA^*-A^*A))}} \nonumber\\& & \qquad - \sum\limits_{p=1}^{r-1}{{(-\rho_0)}^p\sum\limits_{i=1}^{r-p}{N^{(r,p)}_{2i+2}A^{2(r-p-i)+1}{A^*}^r(\sum\limits_{k=0}^i{\sum\limits_{j=0}^2{\rho_0^{i-k}\eta^{(2i+2)}_{k,j}A^{2-j}A^*A^{2k+j}}})}}
\nonumber\\& & \qquad -(-\rho)^{r+1}(N_0^{(r,r+1)}A{A^*}^{r+1}-N_1^{(r,r+1)}{A^*}^rAA^*)\ .\nonumber
\eeqa
The three other monomials $A^{2r+2} {A^*}^{r+1}A$, $A^{2r+1} {A^*}^{r+1}A^2$ and $A^{2r+1}{A^*}^{r+1}$  are also further reduced. For simplicity, corresponding expressions are reported in Appendix B. Now, introduce the combination 
\beqa
\quad &&f_{r+1}(A,A^*)= c^{[r+1,0]}_0 A^{2r+3}{A^*}^{r+1} - c^{[r+1,0]}_1 A^{2r+2}{A^*}^{r+1}A +  c^{[r+1,0]}_2 A^{2r+1}{A^*}^{r+1}A^2 -   \rho_0c^{[r+1,1]}_0 A^{2r+1}{A^*}^{r+1}\    \  \label{combr+1}
\eeqa
with unknown coefficients $c^{[r+1,0]}_j$, $(j=1,2)$, $c^{[r+1,1]}_0$  and normalization $c_0^{[r+1,0]}=1$. Combining all reduced expressions for $A^{2r+3}{A^*}^{r+1}$, $A^{2r+2}{A^*}^{r+1}A$, $A^{2r+1}{A^*}^{r+1}A^2$  and $A^{2r+1}{A^*}^{r+1}$ reported in Appendix B, one observes that  $f_{r+1}(A,A^*)$ generates monomials either of the type (\ref{mongenr}) or  (\ref{mongenUNr}). First, consider monomials of the type (\ref{mongenUNr}) which occur at the lowest order in $\rho_0$, namely $A^{2r}{A^*}^rA^2A^*A$ and $A^{2r}{A^*}^rAA^*A^2$. The conditions under which their coefficients are vanishing read:
\beqa
A^{2r}{A^*}^rA^2A^*A :&&  \qquad  N^{(r,0)}_3\eta^{(3)}_{1,0}+c^{[r+1,0]}_1M_2^{(r,0)}=0\ , \nonumber\\
A^{2r}{A^*}^rAA^*A^2 :&& \qquad N_3^{(r,0)}\eta_{1,1}^{(3)}+c^{[r+1,0]}_2c^{[r,0]}_1=0\ .\nonumber
\eeqa
Using the explicit expressions for $N^{(r,0)}_3$, $\mu^{(3)}_{1,0}$ and $\mu^{(3)}_{1,1}$ given in Appendices A,B, it is easy to solve these these equations. It yields to:
\beqa
c^{[r+1,0]}_1= \left[ \begin{array}{c} 2r+3\\ 1 \end{array}\right]_q  \ ,\qquad c^{[r+1,0]}_2= \left[ \begin{array}{c} 2r+3\\ 2 \end{array}\right]_q \ .\label{coeffr+10}
\eeqa
The conditions under which the coefficients of other unwanted monomials of the type (\ref{mongenUNr}) are vanishing have now to be considered. In particular, similarly to the case $r=2,3$ the coefficients $c^{[r+1,0]}_1$, $c^{[r+1,0]}_2$ arise in  the following set of conditions: 
\ben
&& A^{2r-1}{A^*}^rAA^*A^3: \qquad \quad N_4^{(r,0)}\eta^{(4)}_{1,1}+c^{[r+1,0]}_1M^{(r,0)}_3\eta^{(3)}_{1,1}=0, \nonumber\\
&& A^{2r-1}{A^*}^rA^2A^*A^2: \ \ \quad \quad N^{(r,0)}_4\eta^{(4)}_{1,0}+c^{[r+1,0]}_1M^{(r,0)}_3\eta^{(3)}_{1,0}+c^{[r+1,0]}_2c^{[r,0]}_2=0,\nonumber \\
&& A^{2(r-i)}{A^*}^rAA^*A^{2i+2}: \ \ \  \ N^{(r,0)}_{2i+3}\eta^{(2i+3)}_{i+1,1}+c^{[r+1,0]}_1M^{(r,0)}_{2i+2}\eta^{(2i+2)}_{i,1}+c^{[r+1,0]}_2c^{[r,0]}_{2i+1}\eta^{(2i+1)}_{i,1}=0,\quad i=\overline{1,r},\nonumber \\
&& A^{2(r-i)}{A^*}^rA^2A^*A^{2i+1}: \ \ \ N^{(r,0)}_{2i+3}\eta^{(2i+3)}_{i+1,0}+c^{[r+1,0]}_1M^{(r,0)}_{2i+2}\eta^{(2i+2)}_{i,0}+c^{[r+1,0]}_2c^{[r,0]}_{2i+1}\eta^{(2i+1)}_{i,0}=0,\quad i=\overline{1,r},\nonumber \\
&& A^{2(r-i)+1}{A^*}^rAA^*A^{2i+1}: \ N^{(r,0)}_{2i+2}\eta^{(2i+2)}_{i,1}+c^{[r+1,0]}_1M^{(r,0)}_{2i+1}\eta^{(2i+1)}_{i,1}+c^{[r+1,0]}_2c^{[r,0]}_{2i}\eta^{(2i)}_{i-1,1}=0,\quad i=\overline{2,r},\nonumber\\
&& A^{2(r-i)+1}{A^*}^rA^2A^*A^{2i}: \quad  N^{(r,0)}_{2i+2}\eta^{(2i+2)}_{i,0}+c^{[r+1,0]}_1M^{(r,0)}_{2i+1}\eta^{(2i+1)}_{i,0}+c^{[r+1,0]}_2c^{[r,0]}_{2i}\eta^{(2i)}_{i-1,0}=0\, \quad i=\overline{2,r}.\nonumber
\een
Using the recursion relations in Appendices A,B, we have checked that all above equations are satisfied, as expected.\vspace{1mm}

More generally, one determines all other coefficients $c^{[r+1,0]}_j$ for $j\geq 3$. One finds:
\beqa
c^{[r+1,0]}_3&=& N^{(r,0)}_3\mu^{(3)}_{1,2}={\left[ {\begin{array}{*{20}{c}}
   {2r + 3}  \\
   3  \\
\end{array}} \right]_q},\nonumber\\
c^{[r+1,0]}_4&=& N^{(r,0)}_4\mu^{(4)}_{1,2}+c^{[r+1,0]}_1M^{(r,0)}_3\mu^{(3)}_{1,2}={\left[ {\begin{array}{*{20}{c}}
    {2r + 3}  \\
    4  \\
 \end{array}} \right]_q},\nonumber\\
c^{[r+1,0]}_{2k+1}&=& N^{(r,0)}_{2k+1}\mu^{(2k+1)}_{k,2}+c^{[r+1,0]}_1M^{r,0}_{2k}\mu^{(2k)}_{k-1,2}+c^{[r+1,0]}_2c^{[r,0]}_{2k-1}\mu^{(2k-1)}_{k-1,2},\hspace{.3cm} k=\overline{2,r+1},\nonumber \\
c^{[r+1,0]}_{2k+2}&=& N^{(r,0)}_{2k+2}\mu^{(2k+2)}_{k,2}+c^{[r+1,0]}_1M^{(r,0)}_{2k+1}\mu^{(2k+1)}_{k,2}+c^{[r+1,0]}_2c^{[r,0]}_{2k}\mu^{(2k)}_{k-1,2},\hspace{.3cm} k=\overline{2,r}.\nonumber
\eeqa 
For any $j\geq 0$, one finds that the coefficient  $c^{[r+1,0]}_{j}$ can be simply expressed as a $q-$binomial:
\beqa
c^{[r+1,0]}_j= \left[ \begin{array}{c} 2r+3\\ j \end{array}\right]_q \ .\label{coefbinr}
\eeqa

All coefficients $c^{[r+1,0]}_j$ being obtained, at the lowest order in $\rho_0$ one has  to check that the coefficients of any unwanted term of the type (\ref{mongenUNr}) with $p+s+t= 2r+1,2r-1,...,1$  are systematically vanishing. Using the recursion relations given in Appendices A,B, this has been checked in details.  Then, following the analysis for $r=3$ it remains to determine the coefficient $c^{[r+1,1]}_0$ which contributes at the order $\rho_0$. The condition such that the coefficient of  the monomial $A^{2r} {A^*}^r A A^*$ is vanishing yields to:
\beqa
c^{[r+1,1]}_0={c^{[r,0]}_1}^2-2c^{[r,0]}_2+\frac{c^{[r,0]}_3}{c^{[r,0]}_1}-\frac{c^{[r,1]}_1}{c^{[r,0]}_1}+2c^{[r,1]}_0\ .\label{cr+11}
\eeqa
Using the explicit expression for $c^{[r+1,0]}_j, j=0,1,2$ and $c^{[r+1,1]}_0$, we have checked in details that $f_{r+1}(A,A^*)$ reduces to a combination of monomials of the type (\ref{mongenr}) only.  The reduced expression $f_{r+1}(A,A^*)$  determines uniquely all the remaining coefficients $c^{[r+1,p]}_j$ for $p\geq 1$. For $r$ generic, in addition to (\ref{coefbinr}) and (\ref{cr+11}) one finally obtains:
\beqa
c^{[r+1,r+1]}_0&=& c^{[r+1,1]}_0c^{[r,r]}_0+N^{(r,r+1)}_0,\label{coeff} \\
c^{[r+1,p]}_0&=& N_0^{(r,p)}+c^{[r+1,1]}_0c^{[r,p-1]}_0,\hspace{.3cm} p=\overline{2,r},\nonumber
\eeqa
\vspace{0mm}
\beqa
c_1^{[r+1,1]}&=& N_3^{(r,0)}+c^{[r+1,0]}_1M^{(r,1)}_0,\nonumber \\
c_1^{[r+1,2]}&=& -N_5^{(r,0)}+N_3^{(r,1)}+c^{[r+1,1]}_0c^{[r,0]}_3+c^{[r+1,0]}_1M^{(r,2)}_0,\nonumber \\
c^{[r+1,r+1]}_1&=&\sum_{p=0}^r{(-1)^{r+p}N^{(r,p)}_{2(r-p)+3}}+ c^{[r+1,1]}_0\sum_{p=0}^{r-1}{(-1)^{r+p+1}c^{[r,p]}_{2(r-p)+1}},\nonumber \\
c^{[r+1,p]}_1&=&\sum_{j=0}^{p-1}{(-1)^{j+p+1}N^{(r,j)}_{2(p-j)+1}}+c^{[r+1,0]}_1M^{(r,p)}_0 \nonumber \\& & + c_0^{[r+1,1]}\sum_{j=0}^{p-2}{(-1)^{j+p}c^{[r,j]}_{2(p-j)-1}},\hspace{.3cm} p=\overline{3,r},\nonumber 
\eeqa
\vspace{0mm}
\beqa
c^{[r+1,1]}_2&=&-N^{(r,0)}_4\eta^{(4)}_{0,2}+c^{[r+1,0]}_1M^{(r,0)}_3+c_2^{[r+1,0]}c_0^{[r,1]},\nonumber \\
c^{[r+1,2]}_2&=&N_6^{(r,0)}\eta^{(6)}_{0,2}-N_4^{(r,1)}\eta^{(4)}_{0,2}+c_1^{[r+1,0]}(M_5^{(r,0)}-M_3^{(r,1)})-c^{[r+1,0]}_2c^{[r,2]}_0+c_0^{[r+1,1]}c^{[r,0]}_4\eta^{(4)}_{0,2},\nonumber \\
c^{[r+1,p]}_2&=&\sum_{j=0}^{p-1}{(-1)^{j+p}N^{(r,j)}_{2(p-j)+2}\eta_{0,2}^{(2(p-j)+2)}}+c^{[r+1,0]}_1\sum_{j=0}^{p-1}{(-1)^{j+p+1}M^{(r,j)}_{2(p-j)+1}}\nonumber \\& &+c^{[r+1,0]}_2c^{[r,p]}_0+c_0^{[r+1,1]}\sum_{j=0}^{p-2}{(-1)^{j+p+1}c^{[r,j]}_{2(p-j)}\eta^{(2(p-j))}_{0,2}},\hspace{.3cm} p=\overline{3,r},\nonumber 
\eeqa
\vspace{0mm}
\beqa
c^{[r+1,1]}_3&=&-(N_5^{(r,0)}\eta^{(5)}_{1,2}-N_3^{(r,1)}\eta^{(3)}_{1,2})-c_1^{[r+1,0]}M_4^{(r,0)}\eta^{(4)}_{0,2}+c_0^{[r+1,1]}c_3^{[r,0]}\eta^{(3)}_{1,2}+c_2^{[r+1,0]}c_3^{[r,0]},\nonumber \\
c_3^{[r+1,p]}&=&\sum_{j=0}^p{(-1)^{j+p}N^{(r,j)}_{2(p-j)+3}\eta^{(2(p-j)+3)}_{1,2}} +c_1^{[r+1,0]}\sum_{j=0}^{p-1}{(-1)^{j+p}M^{(r,j)}_{2(p-j)+2}\eta^{(2(p-j)+2)}_{0,2}} 
\nonumber \\& &+ c^{[r+1,0]}_2\sum_{j=0}^{p-1}{(-1)^{j+p+1}c^{[r,j]}_{2(p-j)+1}}+c_0^{[r+1,1]}\sum_{j=0}^{p-1}{(-1)^{j+p+1}c^{[r,j]}_{2(p-j)+1}\eta^{(2(p-j)+1)}_{1,2}},\hspace{.3cm} j=\overline{2,r},\nonumber
\eeqa
\vspace{-0mm}
\beqa
c^{[r+1,1]}_4&=&-(N_6^{(r,0)}\eta^{(6)}_{1,2}-N_4^{(r,1)}\eta^{(4)}_{1,2})-c_1^{[r+1,0]}(M_5^{(r,0)}\eta_{1,2}^{(5)}-M_3^{(r,1)}\eta_{1,2}^{(3)}) -c_2^{[r+1,0]}c_4^{[r,0]}\eta^{(4)}_{0,2}+c_0^{[r+1,1]}c_4^{[r,0]}\eta^{(4)}_{1,2},\nonumber \\
c^{[r+1,p]}_4&=&\sum_{j=0}^{p}{(-1)^{j+p}N^{(r,j)}_{2(p-j)+4}\eta^{(2(p-j)+4)}_{1,2}}+c^{[r+1,0]}_1\sum_{j=0}^p{(-1)^{j+p}M^{(r,j)}_{2(p-j)+3}\eta^{(2(p-j)+3)}_{1,2}}\nonumber \\& &+c^{[r+1,0]}_2\sum_{j=0}^{p-1}{(-1)^{j+p}c^{[r,j]}_{2(p-j)+2}\eta^{(2(p-j)+2)}_{0,2}} +c_0^{[r+1,1]}\sum_{j=0}^{p-1}{(-1)^{j+p+1}c^{[r,j]}_{2(p-j)+2}\eta^{(2(p-j)+2)}_{1,2}},\hspace{.3cm} p=\overline{2,r-1},\nonumber
\eeqa
\vspace{0mm}
\beqa
c^{[r+1,j-k]}_{2k+3}&=&\sum_{p=0}^{j-k}{(-1)^{p+j+k}N^{(r,p)}_{2(j-p)+4}\eta^{(2(j-p)+4)}_{k+1,2}}+c_1^{[r+1,0]}\sum_{p=0}^{j-k}{(-1)^{p+j+k}M^{(r,p)}_{2(j-p)+2}\eta^{(2(j-p)+2)}_{k,2}}\nonumber \\& &+c_2^{[r+1,0]}\sum_{p=0}^{j-k}{(-1)^{p+j+k}c^{[r,p]}_{2(j-p)+1}\eta^{(2(j-p)+1)}_{k,2}}\nonumber \\& &+c_0^{[r+1,1]}\sum_{p=0}^{j-k-1}{(-1)^{p+j+k+1}c^{[r,p]}_{(2(j-p)+1)}\eta^{(2(j-p)+1)}_{k+1,2}},\hspace{.3cm} j=\overline{3,r}, \qquad k=\overline{1,j-2},\nonumber
\eeqa
\beqa
c^{[r+1,1]}_{2j+1}&=&-(N^{(r,0)}_{2j+3}\eta^{(2j+3)}_{j,2}-N^{(r,1)}_{2j+1}\eta^{(2j+1)}_{j,2})-c_1^{[r+1,0]}(\eta^{(2j+2)}_{j-1,2}M^{(r,0)}_{2j+2}-M^{(r,1)}_{2j}\eta^{(2j)}_{j-1,2})\nonumber \\& &-c^{[r+1,0]}_2(c^{[r,0]}_{2j+1}\eta^{(2j+1)}_{j-1,2}-c^{[r,1]}_{2j-1}\eta^{(2j-1)}_{j-1,2})+c_0^{[r+1,1]}c^{[r,0]}_{2j+1}\eta^{(2j+1)}_{j,2},\hspace{.3cm} j=\overline{2,r},\nonumber
\eeqa
\beqa
c^{[r+1,j-k]}_{2k+2}&=&\sum_{p=0}^{j-k}{(-1)^{p+j+k}N^{(r,p)}_{2(j-p)+2}\eta^{(2(j-p)+2)}_{k,2}}+c_1^{[r+1,0]}\sum_{p=0}^{j-k}{(-1)^{p+j+k}M^{(r,p)}_{2(j-p)+1}\eta^{(2(j-p)+1)}_{k,2}}\nonumber \\& & +c_2^{[r+1,0]}\sum_{p=0}^{j-k}{(-1)^{p+j+k}c^{[r,p]}_{2(j-p)}\eta^{(2(j-p))}_{k-1,2}}\nonumber \\& &+c_0^{[r+1,1]}\sum_{p=0}^{j-k-1}{(-1)^{p+j+k+1}c_{(2(j-p))}^{[r,p]}\eta^{(2(j-p))}_{k,2}}, \hspace{.3cm} j=\overline{4,r}, \qquad k=\overline{2,j-2},\nonumber
\eeqa
\beqa
c^{[r+1,1]}_{2j}&=&c^{[r+1,1]}_0c^{[r,0]}_{2j}\eta^{(2j)}_{j-1,2}-N^{(r,0)}_{2j+2}\eta^{(2j+2)}_{j-1,2}+N^{(r,1)}_{2j}\eta^{(2j)}_{j-1,2}\nonumber \\ & &-c^{[r+1,0]}_1(M^{(r,0)}_{2j+1}\eta^{(2j+1)}_{j-1,2}-M^{(r,1)}_{2j-1}\eta^{(2j-1)}_{j-1,2})-c^{[r+1,0]}_2(c^{[r,0]}_{2j}\eta^{(2j)}_{j-2,2}-c^{[r,1]}_{2j-2}\eta^{(2j-2)}_{j-2,2}),\hspace{.3cm} j=\overline{3,r}.\nonumber 
\eeqa
\vspace{4mm}

According to above results and using the automorphism $A\leftrightarrow A^*$ and $\rho_0\leftrightarrow \rho_1$, we conclude that if $A,A^*$ satisfy the defining relations (\ref{qDG1}), (\ref{qDG2}), then the higher order $q-$Dolan-Grady relations (\ref{qDGr}),  (\ref{qDGr2}) are such that the coefficients $c^{[r,p]}_j$ are determined recursively by (\ref{coefbinr}), (\ref{cr+11}) and (\ref{coeff}). For $p\geq1$, they can be computed for practical purpose\footnote{As the reader may have noticed, for $r=2,3$ the coefficients $c_j^{[r,p]}$ are proportional to $[2r+1]_q$ iff $j\neq 0$ or $2r+1$. For a large number of  values $r\geq 4$, this property holds too. As a consequence, the relations (\ref{qDGfinr}) drastically simplify for $q^{2r+1}=\pm1$. This case is however not considered here.}.  In particular, one observes that $c_{j}^{[r,p]}=c_{2(r-p)+1-j}^{[r,p]}$\ . For $r=4,5,...\leq 10$, using a computer program we have checked in details that $r-th$ higher order relations of the form (\ref{qDGfinr}) hold, and that the coefficients satisfy above recursive formula.\vspace{1mm}

\subsection{Comments} Let $A,A^*$ be the fundamental generators of the $q-$Onsager algebra with defining relations (\ref{qDG}). Let $V$ denote an irreducible finite dimensional vector space and assume each of $A,A^*$ is diagonalizable on $V$. Then, it is known \cite[Theorem 3.10]{Ter03} that $A,A^*$ act on $V$ as a TD pair. Assume conjecture \ref{conj1} holds. Then, the coefficients obtained from the two-variable polynomial (\ref{polyqons}) and given by (\ref{cfinr})  must coincide exactly with the coefficients satisfying above recursive formulae. For $r=4,5,...\leq 10$, using a computer program we have checked the correspondence. Also, note that for the special case $\rho_0=\rho_1=0$ Lusztig's higher order $q-$Serre relations for generic values of $r$ are recovered: in this case the coefficients are given by (\ref{coefbinr}) in agreement with  \cite{Luzt}. Besides the proofs for $r=2,3$, these checks give another support for conjecture \ref{conj1}.\vspace{1mm}

To conclude, let us mention that a  proof of conjecture \ref{conj1} for {\it generic values of $r$}  - without using the properties of tridiagonal pairs - would be desirable. In this direction, by analogy with Lusztig's work \cite{Luzt} we expect that the construction of the braid group associated with the $q-$Onsager algebra\footnote{P. Baseilhac and S. Kolb, in progress.} (\ref{qDG}) will help.
\vspace{2mm}

\section{Concluding remarks}
An explicit relationship between the $q-$Onsager algebra (\ref{qDG}) and  a coideal subalgebra of $U_q({\widehat{sl_2}})$ is already known \cite{B1} (see also \cite{IT32},\cite{Kolb}). Recall that there exists an algebra homomorphism from (\ref{qDG}) to $U_q(\widehat{sl_2})$ with scalars $c_i,\overline{c}_i,\epsilon_i \in \K$ such that
\beqa
A&=& c_0e_0q^{h_0/2} + \overline{c}_0f_0q^{h_0/2} + \epsilon_0q^{h_0}\ ,\label{realsl2}\\
 A^*&=& c_1e_1q^{h_1/2} + \overline{c}_1f_1q^{h_1/2} + \epsilon_1q^{h_1} \ ,\nonumber 
\eeqa
where one identifies $\rho_i=c_i\overline{c}_i(q+q^{-1})^2$ for $i=0,1$.  According to conjecture \ref{conj1}, the elements of the coideal subalgebra of $U_q(\widehat{sl_2})$ generated by (\ref{realsl2})  satisfy the higher order $q-$Dolan-Grady relations (\ref{qDGfinr}). If $A,A^*$ act on a finite dimensional vector space $V$ \cite{TD10} - see the explicit examples considered in \cite{BK,BK1} - this is true according to Theorem 2. Also, for any of the special cases $c_i=0$ or  $\overline{c}_i=0$, $\rho_i=0$ so that the relations (\ref{qDGfinr}) reduce to the higher $q-$Serre relations (\ref{hqSerre}). 
\vspace{1mm}

A straightforward application of (\ref{qDGfinr}) concerns the theory of quantum integrable systems, in which case quantum groups provide efficient tools to solve a model. For instance, given a Hamiltonian which commutes with the elements of a quantum group, the structure of the Hamiltonian's spectrum and eigenstates can be studied within the representation theory of the quantum group. For $q$ generic, this approach has been applied to several models. One of the most studied example is the XXZ spin$-s$ chain with periodic or special boundary conditions, and its thermodynamic limit's analogue. For $q$ a root of unity, new features appear\footnote{See e.g. \cite{DFM,RS} and references citing these articles.}. For $s=1/2$ the existence of additional properties  which do not follow from the star-triangle equation has been early noticed by Baxter \cite{Baxter}. Numerically, additional degeneracies have been also observed in the spectrum for arbitrary spin (see references in \cite{DFM,KM})  pointing out the existence of hidden symmetries. A breakthrough was made in \cite{DFM} where it was shown that the model at $q$ a root of unity enjoys a $\widehat{sl_2}$ loop algebra invariance. Remarkably, this property further extends to other integrable  models \cite{KM} (see also \cite{ND,AYP}). For instance, the Fateev-Zamolodchikov XXZ spin chain ($s=1$) \cite{FZ} at $q=e^{i\pi/3}$ which is closely related with the $3-$state superintegrable chiral Potts model. Importantly, in the works \cite{DFM,KM,ND,AYP} the discovery of the hidden $\widehat{sl_2}$ loop symmetry is essentially based on the higher order $q-$Serre relations (\ref{hqSerre}) of $U_q(\widehat{g})$. 
Having this in mind, a natural question is whether other types of loop symmetries can emerge for $q$ a root of unity and certain boundary conditions in open spin chain and related vertex models. Indeed, based on the relation between  a certain coideal subalgebra of $U_q(\widehat{sl_2})$ and the reflection equation associated with the $U_q(\widehat{sl_2})$ $R-$matrix, the spectrum generating algebra associated with the XXZ open spin chain with generic boundary parameters and $q$ has been identified with the $q-$Onsager algebra (\ref{qDG}) \cite{BK,BK1,BK3}. For $q-$root of unity and the boundary parameters tuned appropriately, degeneracies in the spectrum are expected\footnote{See for instance \cite{PS}, \cite{dG}.}. As a consequence, by analogy with the analysis in \cite{DFM,KM,ND,AYP}, the relations (\ref{qDGfinr}) here derived should play a central role in identifying the hidden symmetry of the open XXZ spin chain (as well as other models) at $q$ a root of unity and certain boundary conditions. We intend to study this problem elsewhere.\vspace{1mm}

Finally, let us mention that higher rank generalizations of the $q-$Onsager algebra (\ref{qDG}) have been proposed in \cite{BB1} (see also \cite{Kolb}). By analogy with the case of  $\widehat{g}=\widehat{sl_2}$  here presented,  higher order relations can be constructed following a similar analysis \cite{BV}. We expect these relations will find applications in the theory of tridiagonal algebras associated with higher rank affine Lie algebras $\widehat{g}$.
\vspace{0.5cm}

\noindent{\bf Acknowledgments:} We are indebted to P. Terwilliger for a careful reading of the first version of the manuscript, and sharing with us some of the results presented in Section 2.  P.B thanks S. Baseilhac and S. Kolb for discussions. 
\vspace{0cm}

\vspace{1cm}

\newpage
\centerline{\bf APPENDIX A: Coefficients $ \eta^{(m)}_{k,j}$, $M^{(r,p)}_j$, $N^{(r,p)}_j$}

\vspace{3mm}

The coefficients that appear in eqs. (\ref{eqmu1}), (\ref{eqmu2}) are such that:
\beqa
\eta^{(3)}_{1,0}&=& [3]_q,\qquad \eta^{(3)}_{1,1}=-[3]_q,\qquad \eta^{(3)}_{1,2}=1,\nonumber \\
\eta^{(4)}_{0,0}&=&1,\qquad \eta^{(4)}_{0,1}=q^2+q^{-2},\qquad \eta^{(4)}_{0,2}=-[3]_q,\nonumber \\
\eta^{(4)}_{1,0}&=&(q^2+q^{-2})[3]_q,\qquad \eta^{(4)}_{1,1}=-(q^2+q^{-2})[2]_q^2,\qquad \eta^{(4)}_{1,2}=[3]_q.\nonumber 
\eeqa
The recursion relations for $\eta^{(m)}_{k,j}$ are such that:
\beqa
\eta^{(2n+2)}_{0,0} &=& 1,\nonumber \\
\eta^{(2n+2)}_{k,0}&=&[3]_q\eta^{(2n+1)}_{k,0}+\eta^{(2n+1)}_{k,1},\qquad 1 \leq k \leq n, \nonumber \\
\eta^{(2n+2)}_{0,1}&=&\eta^{(2n+1)}_{1,0}-1, \nonumber \\
\eta^{(2n+2)}_{k,1}&=&-[3]_q\eta^{(2n+1)}_{k,0}+\eta^{(2n+1)}_{k+1,0}+\eta^{(2n+1)}_{k,2},\qquad 1 \leq k \leq n-1, \nonumber \\
\eta^{(2n+2)}_{n,1}&=&-[3]_q\eta^{(2n+1)}_{n,0}+\eta^{(2n+1)}_{n,2},\nonumber \\
\eta^{(2n+2)}_{0,2}&=&-\eta^{(2n+1)}_{1,0}, \nonumber \\
\eta^{(2n+2)}_{k,2}&=&\eta^{(2n+1)}_{k,0}-\eta^{(2n+1)}_{k+1,0},\qquad 1 \leq k \leq n-1, \nonumber \\
\eta^{(2n+2)}_{n,2}&=&\eta^{(2n+1)}_{n,0},\nonumber
\eeqa
and
\beqa
\eta^{(2n+3)}_{k,0}&=&[3]_q\eta^{(2n+2)}_{k-1,0}+\eta^{(2n+2)}_{k-1,1},\qquad 1 \leq k \leq n+1, \nonumber \\
\eta^{(2n+3)}_{k,1}&=&-[3]_q\eta^{(2n+2)}_{k-1,0}+\eta^{(2n+2)}_{k,0}+\eta^{(2n+2)}_{k-1,2},\qquad 1 \leq k \leq n ,\nonumber \\
\eta^{(2n+3)}_{n+1,1}&=&-[3]_q\eta^{(2n+2)}_{n,0}+\eta^{(2n+2)}_{n,2}, \nonumber \\
\eta^{(2n+3)}_{k,2}&=& \eta^{(2n+2)}_{k-1,0}-\eta^{(2n+2)}_{k,0},\qquad 1 \leq k \leq n, \nonumber \\
\eta^{(2n+3)}_{n+1,2}&=& \eta^{(2n+2)}_{n,0}. \nonumber
\eeqa

\vspace{3mm}

The coefficients that appear in eqs. (\ref{eq2}), (\ref{eq3}) are such that:

\beqa
M^{(r,0)}_j&=& c^{[r,0]}_j-c^{[r,0]}_1c^{[r,0]}_{j-1},\qquad j=\overline{2,2r+1}, \nonumber \\
M^{(r,0)}_{2r+2}&=&-c^{[r,0]}_1c^{[r,0]}_{2r+1}, \nonumber \\
M^{(r,p)}_0&=&c^{[r,p]}_0, p=\overline{1,r}, \nonumber \\
M^{(r,p)}_j&=&c^{[r,p]}_j-c^{[r,0]}_1c^{[r,p]}_{j-1},\qquad p=\overline{1,r},\quad j=\overline{1,2(r-p)+1}, \nonumber \\
M^{(r,p)}_{2(r-p)+2}&=&-c^{[r,0]}_1c^{[r,p]}_{2(r-p)+1},\qquad p=\overline{1,r},\nonumber 
\eeqa 
and
\beqa
N^{(r,0)}_j&=&c^{[r,0]}_j-c^{[r,0]}_1c^{[r,0]}_{j-1}+({c^{[r,0]}_1}^2-c^{[r,0]}_2)c^{[r,0]}_{j-2},\qquad j=\overline{3,2r+1},\nonumber \\
N^{(r,0)}_{2r+2}&=&-c^{[r,0]}_1c^{[r,0]}_{2r+1}+({c^{[r,0]}_1}^2-c^{[r,0]}_2)c^{[r,0]}_{2r}, \nonumber \\
N^{(r,0)}_{2r+3}&=&({c^{[r,0]}_1}^2-c^{[r,0]}_2)c^{[r,0]}_{2r+1}, \nonumber \\
N^{(r,1)}_0&=& 0, \nonumber \\
N^{(r,1)}_j&=&({c^{[r,0]}_1}^2-c^{[r,0]}_2)c^{[r,1]}_{j-2}-c^{[r,0]}_1c^{[r,1]}_{j-1}+c^{[r,1]}_j-c^{[r,1]}_0c^{[r,0]}_j,\qquad j=\overline{2,2r-1}, \nonumber \\
N^{(r,1)}_1&=&-c^{[r,0]}_1c^{[r,1]}_0+c^{[r,1]}_1-c^{[r,1]}_0c^{[r,0]}_1,\nonumber\eeqa
\beqa 
N^{(r,1)}_{2r}&=&({c^{[r,0]}_1}^2-c^{[r,0]}_2)c^{[r,1]}_{2r-2}-c^{[r,0]}_1c^{[r,1]}_{2r-1}-c^{[r,1]}_0c^{[r,0]}_{2r}, \nonumber \\
N^{(r,1)}_{2r+1}&=&({c^{[r,0]}_1}^2-c^{[r,0]}_2)c^{[r,1]}_{2r-1}-c^{[r,1]}_0c^{[r,0]}_{2r+1}, \nonumber \\
N^{(r,r+1)}_j&=&-c^{[r,1]}_0c^{[r,r]}_j,\qquad j = \overline{0,1}. \nonumber 
\eeqa
For $2 \leq p \leq r,$ 
\beqa
N^{(r,p)}_j&=&({c^{[r,0]}_1}^2-c^{[r,0]}_2)c^{[r,p]}_{j-2}-c^{[r,0]}_1c^{[r,p]}_{j-1}+c^{[r,p]}_j-c^{[r,1]}_0c^{[r,p-1]}_j, \quad j=\overline{2, 2(r-p)+1}, \nonumber \\
N^{(r,p)}_0&=&c^{[r,p]}_0-c^{[r,1]}_0c^{[r,p-1]}_0,\nonumber \\
N^{(r,p)}_1&=&-c^{[r,0]}_1c^{[r,p]}_0+c^{[r,p]}_1-c^{[r,1]}_0c^{[r,p-1]}_1, \nonumber \\
N^{(r,p)}_{2(r-p)+2}&=&({c^{[r,0]}_1}^2-c^{[r,0]}_2)c^{[r,p]}_{2(r-p)}-c^{[r,0]}_1c^{[r,p]}_{2(r-p)+1}-c^{[r,1]}_0c^{[r,p-1]}_{2(r-p)+2}, \nonumber \\
N^{(r,p)}_{2(r-p)+3}&=&({c^{[r,0]}_1}^2-c^{[r,0]}_2)c^{[r,p]}_{2(r-p)+1}-c^{[r,1]}_0c^{[r,p-1]}_{2(r-p)+3}. \nonumber 
\eeqa

\vspace{1cm}

\newpage

\centerline{\bf APPENDIX B: $A^{2r+2}{A^*}^{r+1}A$, $A^{2r+1}{A^*}^{r+1}A^2$, $A^{2r+1}{A^*}^{r+1}$}

\vspace{3mm}

In addition to (\ref{m1}), the other monomials can be written as:
\begin{eqnarray*}
A^{2r+2}{A^*}^{r+1}A &=&-\sum\limits_{i=1}^r{M^{(r,0)}_{2i+2}A^{2(r-i)}{A^*}^r(\sum\limits_{k=0}^i{\sum\limits_{j=0}^2{\rho_0^{i-k}\eta^{(2i+2)}_{k,j}A^{2-j}A^*A^{2k+j+1}}})} \nonumber\\& & + \sum\limits_{i=1}^r{M^{(r,0)}_{2i+1}A^{2(r-i)+1}{A^*}^r(\sum\limits_{k=1}^i{\sum\limits_{j=0}^2{\rho^{i-k}_0\eta^{(2i+1)}_{k,j}A^{2-j}A^*A^{2k+j}}}+\rho_0^i(AA^*A-A^*A^2))}\nonumber \\ & & - \sum\limits_{p=1}^r{{(-\rho_0)}^p(M^{(r,p)}_0A^{2(r-p)+2}{A^*}^{r+1}A-M^{(r,p)}_1A^{2(r-p)+1}{A^*}^rAA^*A+M^{(r,p)}_2A^{2(r-p)}{A^*}^rA^2A^*A)} \\& & - M^{(r,0)}_2A^{2r}{A^*}^rA^2A^*A-\sum\limits_{p=1}^{r-1}{{(-\rho_0)}^p\sum\limits_{i=1}^{r-p}{M^{(r,p)}_{2i+2}A^{2(r-p-i)}{A^*}^r(\sum\limits_{k=0}^i{\sum\limits_{j=0}^2{\rho_0^{i-k}\eta^{(2i+2)}_{k,j}A^{2-j}A^*A^{2k+j+1}}})}} \nonumber\\& & + \sum\limits_{p=1}^{r-1}{{(-\rho_0)}^p\sum\limits_{i=1}^{r-p}{M^{(r,p)}_{2i+1}A^{2(r-p-i)+1}{A^*}^r(\sum\limits_{k=1}^i{\sum\limits_{j=0}^2{\rho_0^{i-k}\eta^{(2i+1)}_{k,j}A^{2-j}A^*A^{2k+j}}}+\rho_0^i(AA^*A-A^*A^2))}}\ ,
\end{eqnarray*}
\vspace{-0.3cm}
\beqa
\qquad A^{2r+1}{A^*}^{r+1}A^2 &=& c^{[r,0]}_1A^{2r}{A^*}^rAA^*A^2-c^{[r,0]}_2A^{2r-1}{A^*}^rA^2A^*A^2 \nonumber\\& &- \sum\limits_{p=1}^{r-1}{{(-\rho_0)}^p(c^{[r,p]}_0A^{2(r-p)+1}{A^*}^{r+1}A^2-c^{[r,p]}_1A^{2(r-p)}{A^*}^rAA^*A^2+c^{[r,p]}_2A^{2(r-p)-1}{A^*}^rA^2A^*A^2)} \nonumber\\& &-{(-\rho_0)}^r(c^{[r,r]}_0A{A^*}^{r+1}A^2-c^{[r,r]}_1{A^*}^rAA^*A^2) \nonumber\\& &+\sum\limits_{i=1}^r{c^{[r,0]}_{2i+1}A^{2(r-i)}{A^*}^r(\sum\limits_{k=1}^i{\sum\limits_{j=0}^2{\rho_0^{i-k}\eta^{(2i+1)}_{k,j}A^{2-j}A^*A^{2k+1+j}}}+\rho_0^i(AA^*A^2-A^*A^3))} \nonumber\\ & & - \sum\limits_{i=1}^{r-1}{c^{[r,0]}_{2i+2}A^{2(r-i)-1}{A^*}^r(\sum\limits_{k=0}^i{\sum\limits_{j=0}^2{\rho_0^{i-k}\eta^{(2i+2)}_{k,j}A^{2-j}A^*A^{2k+j+2}}})}\nonumber \\& &  +\sum\limits_{p=1}^{r-1}{{(-\rho_0)}^p\sum\limits_{i=1}^{r-p}{c^{[r,p]}_{2i+1}A^{2(r-p-i)}{A^*}^r(\sum\limits_{k=1}^i{\sum\limits_{j=0}^2{\rho_0^{i-k}\eta^{(2i+1)}_{k,j}A^{2-j}A^*A^{2k+1+j}}}+\rho_0^i(AA^*A^2-A^*A^3))}} \nonumber\\& & -\sum\limits_{p=1}^{r-2}{{(-\rho_0)}^p\sum\limits_{i=1}^{r-p-1}{c^{[r,p]}_{2i+2}A^{2(r-p-i)-1}{A^*}^r(\sum\limits_{k=0}^i{\sum\limits_{j=0}^2{\rho_0^{i-k}\eta^{(2i+2)}_{k,j}A^{2-j}A^*A^{2k+j+2}}})}}\ ,\nonumber
\eeqa
\vspace{-0.3cm}
\begin{eqnarray*}
A^{2r+1}{A^*}^{r+1}&=&c^{[r,0]}_1A^{2r}{A^*}^rAA^*-c^{[r,0]}_2A^{2r-1}{A^*}^rA^2A^* \\& &+\sum\limits_{k=1}^r{c^{[r,0]}_{2k+1}A^{2(r-k)}{A^*}^r(\sum\limits_{i=1}^k{\sum\limits_{j=0}^2{\mu^{(2k+1)}_{i,j}A^{2-j}A^*A^{2i-1+j}}}+\rho^k(AA^*-A^*A))} \\& & -\sum\limits_{k=1}^{r-1}{c^{[r,0]}_{2k+2}A^{2(r-k)-1}{A^*}^r(\sum\limits_{i=0}^k{\sum\limits_{j=0}^2{\mu^{(2k+2)}_{i,j}A^{2-j}A^*A^{2i+j}}})} \\& & -\rho^r(c^{[r,r]}_0A{A^*}^{r+1}-c^{[r,r]}_1{A^*}^rAA^*) \\& & -\sum\limits_{p=1}^{r-1}{\rho^p(c^{[r,p]}_0A^{2(r-p)+1}{A^*}^{r+1}-c^{[r,p]}_1A^{2(r-p)}{A^*}^rAA^*+c^{[r,p]}_2A^{2(r-p)-1}{A^*}^rA^2A^*)} \\& & + \sum\limits_{p=1}^{r-1}{\rho^p\sum\limits_{k=1}^{r-p}{c^{[r,p]}_{2k+1}A^{2(r-p-k)}{A^*}^r(\sum\limits_{i=1}^k{\sum\limits_{j=0}^2{\mu^{(2k+1)}_{i,j}A^{2-j}A^*A^{2i-1+j}}}+\rho^k(AA^*-A^*A))}} \\& & - \sum\limits_{p=1}^{r-2}{\rho^p\sum\limits_{k=1}^{r-p-1}{c^{[r,p]}_{2k+2}A^{2(r-p-k)-1}{A^*}^r(\sum\limits_{i=0}^k{\sum\limits_{j=0}^2{\mu^{(2k+2)}_{i,j}A^{2-j}A^*A^{2i+j}}})}}.
\end{eqnarray*}

\newpage

\end{document}